\documentclass[12pt]{article}
\usepackage{graphicx}
\usepackage{amssymb,amsthm,amsfonts,amstext}
\usepackage{url}
\def\be{\begin{equation}}
\def\ee{\end{equation}}
\def\bea{\begin{eqnarray}}
\def\eea{\end{eqnarray}}
\def\bma{\begin{mathletters}}
\def\ema{\end{mathletters}}

\def\0{\overline{0}}

\def\q0{\underline{0}}

\def\H{{\cal H}}

\def\C{{\mathbb C}}
\def\id{{\mathbb I}}

\def\H{{\cal H}}

\def\B{{\cal B}}
\def\R{\mathbb{R}}
\def\N{\mathbb{N}}
\def\Z{\mathbb{Z}}

\def\tr{\mbox{tr}}
\def\dist{\mbox{dist}}
\def\M{{\cal M}}
\def\one{\leavevmode\hbox{\small1\normalsize\kern-.33em1}}

\def\bra#1{\langle#1|} \def\ket#1{|#1\rangle}
\def\braket#1#2{\langle#1|#2\rangle}

\def\proj#1{\ket{#1}\!\bra{#1}}

\newtheorem{theo}{Theorem}

\newtheorem{obs}[theo]{Observation}

\def\id{{\mathbb I}}
\def\tr{\mbox{tr}}

\begin{document}

\title{How energy conservation limits our measurements}

\author{Miguel Navascu\'es and Sandu Popescu\\
H. H. Wills Physics Laboratory, University of Bristol,\\ Tyndall Avenue, Bristol, BS8 1TL, United Kingdom}

\maketitle

\begin{abstract}
Observations in Quantum Mechanics are subject to complex restrictions arising from the principle of energy conservation. Determining such restrictions, however, has been so far an elusive task, and only partial results are known. In this paper we discuss how constraints on the energy spectrum of a measurement device translate into limitations on the measurements which we can effect on a target system with non-trivial energy operator. We provide efficient algorithms to characterize such limitations and we quantify them exactly when the target is a two-level quantum system. Our work thus identifies the boundaries between what is possible or impossible to measure, i.e., between what we can see or not, when energy conservation is at stake.
\end{abstract}

\section{Introduction}
The success of quantum computation \cite{cirac,briegel} and quantum simulation \cite{retzker,kraus} schemes depends in part on our ability to measure quantum systems with good enough precision. Sometimes (e.g., in ion-trap experiments \cite{ion}), such measurements are conducted in systems with a non-trivial energy operator, and so are strongly limited by the law of energy conservation. Indeed, Wigner was among the first to notice the impossibility of measuring exactly any observable described by an operator which does not commute with the system's conserved quantities. He showed, nevertheless, that an arbitrarily close measurement of such observables was possible if the dimensions of the measurement device were large enough \cite{wigner}. These results were formalized in the Wigner-Araki-Yanase (WAY) theorem \cite{araki,yanase} and quantified some time later by Ozawa \cite{ozawa}, who provided a general uncertainty relation which bounds the mean square noise in the measurement of an arbitrary observable as a function of the variances of the conserved quantities in system and measurement apparatus. This relation has been applied successfully to estimate the error of certain lab-induced evolutions of two-level quantum systems \cite{gate_fid,gate_fid2}. Sadly enough, and despite its generality and range of applicability, Ozawa's uncertainty relation is not tight, and sometimes greatly underestimates the errors it tries to bound. 

In this article we study how measurements of a target system effected by a quantum device are limited by the energy distribution of the latter. We quantify such limits analytically when the target is a two-level system. Additionally, we provide an efficient algorithm to completely characterize the set of attainable measurements in arbitrarily high dimensional targets.

The structure of the article is as follows: in Section \ref{bell} we will motivate our study by analyzing an experimental scheme for Bell inequality violation by means of energy-conserving transformations. Then, in Section \ref{considerations}, we will describe our measurement model and explain the role of the quantum device's energy spectrum in the measuring process. To quantify the effect of the energy spectrum on the set of available measurements, we introduce in Section \ref{quantifying} two operational distances between an arbitrary pair of quantum measurements. Using these notions, in Section \ref{two_levels} we find that the difference between the set of all conceivable two-level measurements and the set of all measurements implementable by measurement devices with a battery in one of the states $\B=\{\sigma_B\}$ can be quantified operationally by a single parameter, $\tau(\B)$. We will calculate this parameter in two interesting scenarios: 1) measurement devices with finite energy spectrum; and 2) measurement devices with unbounded energy spectrum, but finite average energy $\bar{E}$. Later, in Section \ref{charact}, we will describe an efficient algorithm to characterize the set of accessible measurements for target systems of any dimension. Finally, in Section \ref{paradox} we will propose a physical mechanism to explain why we should expect to estimate non-trivial observables when the spectra of target system and measurement apparatus are just approximately resonant.

Before starting, though, we would like to call attention to the recent and related work of Ahmadi et al. \cite{mehdi}, which also analyzes how the quantum state of the control device influences its capabilities for quantum measurements and control. Among its main results the reader can find a reformulation of the WAY theorem in the language of resource theories \cite{resource1,resource2} and the realization that, in finite dimensional systems, there does not exist such a thing as an optimal universal ancillary state, i.e., the optimal quantum state of the control device will depend on which specific quantum operation we wish to implement in the lab.

\section{A Bell experiment under energy conservation}
\label{bell}

Let two parties, call them Alice and Bob, be distributed the entangled state $\ket{\varphi}_{AB}\equiv\frac{1}{\sqrt{2}}(\ket{0}_{A}\ket{1}_{B}+\ket{1}_{A}\ket{0}_B)$, expressed in the Fock basis (where $\ket{n}$ represents a state with $n$ photons). We will show that, when restricted to apply (energy-conserving) passive optical interactions, Alice and Bob cannot violate any Bell inequality. However, they can violate the Clauser-Horne-Shimony-Holt (CHSH) Bell inequality \cite{chsh} if, in addition, they are given each a copy of the state $\ket{+}=\frac{1}{\sqrt{2}}(\ket{0}+\ket{1})$.

\begin{figure}
  \centering
  \includegraphics[width=9 cm]{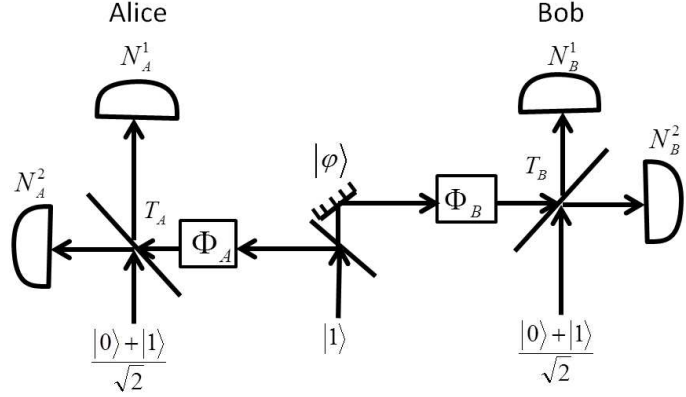}
    \caption{{\bf Bell experiment under energy-conserving interactions.} Alice and Bob can violate the CHSH Bell inequality by means of beam splitters of varying transmitivity $T_A,T_B$ and shifters of phase $\Phi_A,\Phi_B$ if, besides the entangled state $\ket{\varphi}$, they are distributed each a copy of the reference state $\ket{+}$. The quantity $N_A^k$ ($N_B^k$) denotes the number of photons registered by Alice's (Bob's) $k^{th}$ detector.}
 \label{Bell}
\end{figure}

First, as beam-splitting and phase-shifting transformations (and photo-counting) commute with the total photon number, if Alice and Bob are given the state $\ket{\varphi}_{AB}$, the statistics they will observe are indistinguishable from those generated by the locally dephased state $\frac{1}{\sqrt{2}}(\proj{0}_A\otimes\proj{1}_B+\proj{1}_A\otimes\proj{0}_B)$. Since such is a separable state, under no circumstances they will achieve a Bell inequality violation.

Suppose now that Alice and Bob are also distributed each a copy of the state $\ket{+}$. After local dephasing, their joint state $\ket{\varphi}_{AB}\ket{+}_{A'}\ket{+}_{B'}$ turns into

\begin{eqnarray}
&\rho_{AA'BB'}=\frac{1}{8}\left\{\proj{0,0}_{AA'}\otimes (\proj{1,0}_{BB'}+\proj{1,1}_{BB'})+\right.\nonumber\\
&+(\proj{1,0}_{AA'}+\proj{1,1}_{AA'})\otimes \proj{0,0}_{BB'}+\nonumber\\
&+\proj{1,1}_{AA'}\otimes\proj{0,0}_{BB'}+\proj{0,0}_{AA'}\otimes\proj{1,1}_{BB'}+\nonumber\\
&\left. +2\proj{\psi}_{AA'BB'}\right\},
\label{dephased}
\end{eqnarray}

\noindent with $\ket{\psi}=\frac{1}{\sqrt{2}}(\ket{0,1}_{AA'}\ket{1,0}_{BB'}+\ket{1,0}_{AA'}\ket{0,1}_{BB'})$.


Since with probability $\frac{6}{8}$ the density matrix $\rho_{AA'BB'}$ is a separable state; and with probability $\frac{2}{8}$, a maximally entangled state, no matter which measurements Alice and Bob apply, they will never be able to violate CHSH by more than $\frac{6}{8}\times 2+\frac{2}{8}\times 2\sqrt{2}\approx 2.2071$

Now, define $\sigma_x\equiv\ket{0,1}\bra{1,0}+\ket{1,0}\bra{0,1}$, $\sigma_z\equiv\proj{0,1}-\proj{1,0}$, and, for $k=1,2$, consider the following dichotomic operators:

\begin{eqnarray}
&A_k=\proj{0,0}-\proj{1,1}+\tilde{A}_k,\nonumber\\
&B_k=\proj{1,1}-\proj{0,0}+\tilde{B}_k,
\end{eqnarray}

\noindent with

\be
\tilde{A}_1=\sigma_z,\tilde{A}_2=\sigma_x,\tilde{B}_1=\frac{1}{\sqrt{2}}(\sigma_x-\sigma_z),\tilde{B}_2=-\frac{1}{\sqrt{2}}(\sigma_x+\sigma_z).
\ee

\noindent It can be verified that

\begin{eqnarray}
&\tr\left\{(\rho_{AA'BB'} (A_1\otimes B_1+ A_1\otimes B_2+ A_2\otimes B_1- A_2\otimes B_2)\right\}=\nonumber\\
&=1+\frac{3}{4}\sqrt{2}\approx 2.0607,
\end{eqnarray}

\noindent i.e., when these measurements are applied on state $\rho_{AA'BB'}$, they generate a CHSH value beyond the classical limit.

Given the setup depicted in Figure \ref{Bell}, Alice and Bob can implement measurements $A_1,A_2,B_1,B_2$ by fixing appropriately the values of their transmitivities and phase shifts, and assigning labels $\pm1$ to the possible outcomes $N_A^1,N_A^2, N_B^1,N_B^2$ of their photo-detectors via the prescriptions:

\begin{eqnarray}
A=&1,& \mbox{ if } N_A^1+N_A^2=0, \mbox{ or } N_A^1=0,N_A^2=1\nonumber\\
&-1,& \mbox{ if } N_A^1+N_A^2=2, \mbox{ or } N_A^1=1, N_A^2=0;
\end{eqnarray}

\begin{eqnarray}
B=&1,& \mbox{ if } N_B^1+N_B^2=2, \mbox{ or } N_B^1=0,N_B^2=1\nonumber\\
&-1,& \mbox{ if } N_B^1+N_B^2=0, \mbox{ or } N_B^1=1, N_B^2=0.
\end{eqnarray}

\noindent It follows that the presence of coherent superpositions of energy eigenstates $\ket{+}$ increases the set of local measurements which Alice and Bob can implement over the state $\ket{\varphi}$. 

It can be seen that, when distributed highly energetic coherent states rather than $\ket{+}$ as an ancilla, Alice and Bob can measure any observable in the subspace $\mbox{span}(\ket{0},\ket{1})$ up to arbitrary precision, and so they can get as close as they want to the maximal quantum violation of CHSH. Our work focuses precisely on how the energy spectrum of such ancillary states constrains the set of achievable measurements in quantum systems with non-trivial energy operator.

\section{Measurements under energy conservation}
\label{considerations}

\subsection{The measurement model}

A general measurement over a quantum system in state $\rho\in B(\H)$ is described by a set of positive semidefinite operators $\{M_x\}_{x=0}^n$, with $\sum_{x=0}^n M_x=\id_\H$, such that 

\be
p(x)=\tr(\rho M_x)
\label{probabi}
\ee

\noindent denotes the probability of obtaining the outcome $x$. In general, such a measurement is physically realized by attaching our \emph{target} system $S$ to a \emph{pointer} $P$, and making them interact via a third system $C$ that we will denote the \emph{clock}. 


We will assume that the clock does not transfer energy to the system under observation, i.e., if we call $\Omega$ the quantum channel induced by the clock on to systems $S,P$, then $\Omega(\rho_{SP})$ must have the same energy distribution as $\rho_{SP}$ for whatever initial state $\rho_{SP}$ of the target and the pointer. 

Such is the case, for instance, when the clock is a macroscopic object and the interaction is introduced adiabatically. Indeed, let $g(t)H^{\mbox{int}}_{SP}$ be the interaction effected by the clock on system $SP$ at time $t$, with $g(-\infty)=g(\infty)=0$. If $g(t)$ varies slowly enough, the adiabatic theorem guarantees (modulo level-crossings) that the energy of system $SP$ will be the same at the end of the process \cite{adiabatic}. In this measurement model, the outcome $x$ of the effective measurement on system $S$ is obtained by conducting a von Neumann measurement of the state of the pointer.

As we will see, energy conserving measurements are very limited, so we must include in the picture a fourth quantum system, the \emph{battery} ($B$), to model the energy exchange between the target and the measurement apparatus. The battery has a non-trivial energy operator $H_B$, and only interacts with the system $SP$ via the clock.

Some clarifications about this model are in order: we divide the measurement device into three different parts (the clock, the pointer and the battery) to separate different effects so that we can easily follow and analyze them. A clock, by definition, must change its state, hence, first of all, its Hamiltonian cannot be identically zero and second, it cannot be in an energy eigenstate, but it must be in a superposition of different energy eigenstates. In general a clock may exchange energy with the system - turning on and off an interaction may excite the system. So when we discuss the energy exchange with the system we must take the clock into account as well. This significantly complicates the analysis. We thus prefer to separate the issue of energy exchange from the clock. When the turning on and off is adiabatic, such an energy exchange is prevented.  The entire energy exchange is then with the battery. Effectively this way we succeed to separate the clock from the energy exchange, as intended. We also want to be able to keep track of all possible entanglements and to use the clock only for timing. If the target is to get entangled with some other system, we provide this as an explicit ancilla (that we can associate with the battery, for instance). Hence we take the clock as `classical'. The main point of all this is that such a clock allows us to produce a well defined energy conserving unitary. Note, finally, that this formalism actually already contains implicitly a non-adiabatic, non-classical clock, if we regard part of the battery as part of the clock. Hence our simplified model loses no generality. 

The interaction $U$ between the target, the pointer and the battery mediated by the clock will thus be such that the distribution of the total energy $H_T$ of the composed system $SPB$ will be conserved. In other words, if $H_T=\sum_n E_nP_n$ is the spectral decomposition of the total hamiltonian, then $\tr\{\rho_{SPB}P_n\}=\tr\{U\rho_{SPB} U^\dagger P_n\}$ must hold for all joint states $\rho_{SPB}$ of target, pointer and battery, and all $n$. This condition can be seen equivalent to $[U,H_T]=0$, see Appendix \ref{general}. In Appendix \ref{general} it is also shown that any completely positive trace-preserving map $\Omega$ that conserves the energy distribution can be understood as a unitary $U$ acting on $SPB$ and an ancillary system $A$ such that $\Omega(\rho_{SPB})=\tr_A\{U(\rho_{SPB}\otimes \proj{0}_A)U^\dagger\}$ and $[U,H_T\otimes \id_A]=0$. Thus even if we relaxed our previous assumptions and allowed the clock to get entangled with system $SPB$, we could always redefine the battery as $BA$ (with hamiltonian $H_B\otimes \id_A$) and again view the clock's interaction as a unitary commuting with $H_T$.

What is the total energy $H_T$ in this model? Note that the pointer can always be chosen such that its local energy operator is trivial\footnote{Given that the pointer must hold a reliable record of the outcome of the measurement, the states $\{\ket{a}\}$ must be orthogonal and do not evolve in time, i.e., they must be eigenstates of the hamiltonian $H_P$ of the pointer. If the energy of such states is equal, then we can assume that $H_P=0$. If not, we can always attach a pointer $P'$ to $P$ with $H_{P'}=0$ and make the joint system evolve via the unitary $V=\sum_{a}\proj{a}_P\otimes D(a)_{P'}$, where $D(a)\ket{k}=\ket{k+a \mbox{ }(\mbox{mod } n+1)}$. If $P'$ is initially set to the state $\ket{0}$, this unitary will copy the measurement information to $P'$. Moreover, $[H_P\otimes \id_{P'},V]=0$, i.e., $V$ represents an energy-conserving operation. We can hence take $P'$ to be our actual pointer and regard $P$ as part of the battery system.}. Denoting by $H_S$ the energy operator of system $S$, this means that

\be
H_T=H_S\otimes \id_{PB}+\id_{SP}\otimes H_B.
\ee

For illustration, think of homodyne measurements in quantum optics: there, the target is the laser beam to be measured, and the battery is a high energy light pulse derived from a local oscillator. The displacement of these two beams through optical fibers plays the role of the clock, which switches on and off an interaction with a beam splitter and two photodetectors. Finally, the difference between the intensity of both photodetectors is then printed on a piece of paper (the pointer).

\subsection{The role of the battery}

Suppose that we wish to measure our target system by means of a battery-less quantum device. Appendix \ref{general} shows that the POVM elements $\{M_x\}$ describing any such measurement must commute with $H_S$. If, in addition, system $S$ has a non-degenerate energy operator $H_S=\sum_m E_m\proj{m}$, then each $M_x$ must admit a diagonalization of the form $M_x=\sum_m p_x^{(m)}\proj{m}$. By the completeness relation, we further have that $\sum_x p_x^{(m)}=1$ for all $m$. It follows that the measurement of any property $x$ can be simulated by the following process:

\begin{enumerate}
\item Measure the energy operator $H_S$, thus obtaining an outcome $m$ (corresponding to the energy value $E_m$).
\item Generate $x$ randomly according to the distribution $p(x)=p_x^{(m)}$.
\end{enumerate}

Measurements in this scenario are thus not very `quantum', in the sense that all we need to return an outcome is the expression of the energy density of system $S$. In particular, no matter how much entangled system $S$ is with another system $S'$, two experimentalists at each site could never violate a Bell inequality or even prove that their joint state is entangled. Indeed, let $\rho_{SS'}$ be the joint state of systems $S,S'$. Then, if the experimentalist at $S$ is restricted to perform measurements which commute with $H_S$, it is easy to see that $\rho_{SS'}$ will produce the same bipartite measurement statistics as the classical-quantum state

\be
\tilde{\rho}_{SS'}=\sum_{m}p(m)\proj{m}_S\otimes \rho^{(m)}_{S'}.
\ee

\noindent where $p(m)$ is the probability that system $m$ has energy $E_m$ and $\rho^{(m)}=\bra{m}\rho_{SS'}\ket{m}/p(m)$.

How does the picture change when our measurement device has a battery? Let us assume that such is the case, and that $H_B=\sum_n \mu_n \proj{n}$ is non-degenerate. This last condition is not restrictive in the least, since we can always introduce degeneracy later via the pointer.

As before, any effective measurement in the joint system $SB$ will be described by a complete set of POVM elements $\{M_x\}$ commuting with the energy operator of the system. The difference stems in this case in that the new energy operator is not $H_S$, but $H_S\otimes \id_{B}+\id_{S}\otimes H_B$. 

This can make a huge difference. Coming back to the example of homodyne measurements, note that a measurement of the quadrature $x$ cannot be performed via passive linear optical elements and photodetectors alone. Indeed, unlike in homodyne measurements, any such measurement would commute with the total photon number. Homodyne measurements can nevertheless be performed via linear optical elements up to arbitrarily good approximations if we introduce a second high energy laser pulse (i.e., a battery).

Note that

\be
H_T=\sum_{m,n} (E_m+\mu_n)\proj{m}\otimes \proj{n}.
\label{total_energy}
\ee

\noindent We will say that there exists a resonance between the hamiltonians $H_B,H_S$ iff there exist $m,m',n,n'$ such that

\be
E_m-E_{m'}= \mu_n-\mu_{n'}.
\ee

In the particular case that there are no resonances, the eigenspaces of $H_T$ are given by $\{\ket{m}\otimes \ket{n}\}$. Any operator $M$ commuting with $H_T$ will necessarily be of the form $M=\sum_{m,n} M_{n,m}\proj{m}\otimes \proj{n}$. Let $\sigma_B\in B(\H_B)$ be the initial state of the battery. Then, any measurement $\{M_x\}$ performed over the system $SB$ through this scheme can be simulated by the process:

\begin{enumerate}
\item 
Measure the energy of system $S$, obtaining outcome $m$.
\item
Output $x$ randomly from the distribution $p(x)=\tr(\bra{m}M_x\ket{m}\sigma_B)$.

\end{enumerate}

\noindent Non-resonant hamiltonians $H_B$ thus do not provide any advantage towards measuring or interacting with system $S$ in a quantum way.

What about resonant hamiltonians? Suppose that our target system $S$ is a two-level system, with $H_S=\proj{1}$, where $\ket{1}$ ($\ket{0}$) denotes the excited (ground state) of $S$. Let us study which kind of \emph{effective} measurements we can implement on system $S$ if we couple it to a battery with hamiltonian $H_B=\sum_{k=0}^{d-1}k\proj{k}$.

The total energy of the system can be seen equal to

\begin{eqnarray}
&H_T=\sum_{k=1}^{d-1} k\{\proj{0}\otimes\proj{k}+\proj{1}\otimes\proj{k-1}\}+\nonumber\\
&+d\proj{1}\otimes \proj{d-1}.
\end{eqnarray}

\noindent The eigenspaces of $H_T$ are thus defined by the projectors

\begin{eqnarray}
&\proj{0}\otimes \proj{0}, \nonumber\\
&\{\proj{0}\otimes\proj{k}+\proj{1}\otimes\proj{k-1}\}_{k=1}^{d-1},\nonumber\\
&\proj{1}\otimes \proj{d-1}\}_j.
\end{eqnarray}

Call the corresponding spaces $\H^{k}$, with $\H^{0}=\mbox{span}\{\ket{0}\ket{0}\}$ and $\H^{d}=\mbox{span}\{\ket{1}\ket{d-1}\}$. Then, a generic POVM element in $\H^{k}$ has the form 

\be
M^{k}=\sum_{a,b=0,1}(M^{k})_{ab}\ket{a}\bra{b}\otimes\ket{k-a}\bra{k-b},
\label{physical_ex}
\ee

\noindent where $\ket{-1}=\ket{d}=0$ by definition. We will call such POVMs \emph{physical}, since they correspond to the actual operation effected on the quantum system $SB$.

Let $\sigma_B$ be the state of the battery. Then, a physical POVM of the form $M_x=\oplus_{k} M^{k}_x$ induces in system $S$ an \emph{effective} POVM

\be
(\tilde{M}_x)_{ab}=\sum_{k=0}^{d}\bra{k-b}\sigma_B\ket{k-a}(M^{k}_x)_{ab}.
\label{many_chains_mix_ex}
\ee

Now, take $\sigma_B=\proj{\psi_B}$, with $\ket{\psi_B}=\frac{1}{\sqrt{d}}\sum_{k=0}^{d-1}\ket{k}$ and let $\{M_x\}_x$ be an arbitrary two-level POVM that we want to approximate. Set the physical POVM elements to be equal to $\{M_x\}_x$, i.e., $(M^k_x)_{ab}=(M_x)_{ab}$. Then, one can verify that

\be
(\tilde{M}_x)_{ab}=(M_x)_{ab}\left(1+\frac{\delta_{ab}-1}{d}\right).
\ee

\noindent It thus follows that any measurement can be approximated up to arbitrary precision by taking $d$ large enough. It is straightforward to extend this result to multi-level target systems.

Notice that, in order to perform non-trivial quantum measurements over the previous system, an \emph{exact} resonance between $H_B$ and $H_S$ is required. This is certainly counter-intuitive: one would expect that ancillary systems with energy operator $\tilde{H}_B\approx H_B$ \emph{nearly} resonant with $H_S$ should induce similar effective measurements over system $S$ (and thus approximate the set of all possible two-level measurements for $d\gg 1$). In Section \ref{paradox} we provide a possible solution to this apparent paradox, by invoking the existence of hidden continuous degrees of freedom.

The aim of the rest of the article is to determine exactly how the cardinality of the spectra and/or the energy of our measurement device constrain the set of effective POVMs that such a device is able to implement on its target system. But, before this, we will have to specify means to quantify such constraints.

\section{Quantifying the size of the set of accessible measurements}
\label{quantifying}

Imagine that we hold a measurement device whose battery we can initially prepare in a set of states ${\cal B}=\{\sigma_B\}$\footnote{Note that in the main text we identified ${\cal B}$ with a set of energy distributions, and considered all possible battery states with energy density in ${\cal B}$. Since that is a particular set of quantum battery states, the statements in the main article follow from the more general claim stated here.}, and call $\M(\B,d)$ the set of all effective POVMs which it allows to implement in a target system of dimension $d$ and energy operator $H_S\in B(\C^d)$. In the following section, we will try to quantify how $\M(\B,d)$ compares with $\M(d)$, the set of all possible POVMs in $\C^d$.

To do so, we must first introduce a natural distance between two different POVMs. Suppose that we have a device capable of implementing either the measurement $M^0\equiv\{M^0_x\}_x$ or $M^1\equiv\{M^1_x\}_x$ with probability $1/2$. We let it act over a suitably prepared quantum state $\rho$ and, from the outcome $x$ obtained, we try to guess which of the two POVMs our machine is actually implementing. It can be shown (see Appendix \ref{distances}) that the maximum probability $P_C$ of correctly guessing the POVM is then given by

\be
P_C=\frac{1}{2}\{1+\dist_C(M^0,M^1)\},
\ee

\noindent where $\dist_C(M^0,M^1)$ is defined as

\be
\dist_C(M^0,M^1)=\frac{1}{2}\max_{\rho}\sum_x |\tr\{\rho(M^0_x-M^1_x)\}|,
\label{dist_C}
\ee

\noindent and the maximization is to be performed over all normalized quantum states $\rho$. We will call this latter expression the \emph{classical distance} between POVMs $M^0$, $M^1$. It satisfies the triangle inequality (i.e., it is a proper distance), and has maximum value $1$. The reason why we call it classical is that, in the previous protocol, the POVM is guessed by analyzing the classical data $x$. The classical distance is somehow related to the distance between quantum maps introduced in \cite{geza}.

Analogously, we can define a \emph{quantum distance} between POVMs, by means of a protocol where the player is allowed to input part of an entangled state $\rho_{DQ}$ in the measurement device, which then performs a demolition measurement on system $D$. Depending on the outcome $x$ of such a measurement, the player will implement a POVM $N_{a}^x$, with outcomes $a\in\{0,1\}$ on system $Q$ in order to decide which of the two POVMs, $M^0$ or $M^1$, is actually being measured. The probability $P_Q$ of correctly guessing the POVM can then be seen equal to

\be
P_Q=\frac{1}{2}\{1+\dist_{Q}(M^0,M^1)\},
\ee

\noindent with

\be
\dist_Q(M^0,M^1)=\frac{1}{2}\max_{\rho_{DQ}}\sum_x\|\tr_D(\rho_{DQ} (M^0_x-M^1_x)\otimes \id_Q\|_1
\label{dist_Q}
\ee

\noindent where $\rho_{DQ}$ varies over all possible states $\rho_{DQ}\in B(\H_D\otimes \H_Q)$ and all possible Hilbert spaces $\H_Q$. See Appendix \ref{distances} for a proof. As the classical distance, $\dist_Q$ satisfies the triangle inequality and has maximum value $1$. Also, $\dist_Q(M^0,M^1)\geq \dist_C(M^0,M^1)$, for all $M^0$, $M^1$. $\dist_Q(M^0,M^1)$ actually corresponds to the diamond norm \cite{dorit} between the quantum channels $\Omega^a(\rho)=\sum_x\tr(\rho M_x)\proj{x}$, with $a=1,2$.

It can be seen that, if $M^0,M^1$ are two-outcome POVMs, $\dist_Q(M^0,M^1)= \dist_C(M^0,M^1)$. However, even in dimension 2 there exist examples of POVMs where $\dist_Q(M^0,M^1)> \dist_C(M^0,M^1)$ (Appendix \ref{distances}).

These two distances suggest a simple way to quantify the difference between a particular set of POVMs $\M'$ acting on $\C^d$ and the set $\M(d)$ of all possible quantum measurements in that space, by computing the maximum distance between an arbitrary element of $\M(d)$ and the set $\M'$. This intuition leads to the following definitions:

\be
\epsilon_C(\M')=\max_{M\in\M(d)}\dist_C(M,\M');\epsilon_Q(\M')=\max_{M\in\M(d)}\dist_Q(M,\M').
\ee

\noindent $\epsilon_C(\M')$, $\epsilon_Q(\M')$ thus measure the worst-case probability of correctly distinguishing a general POVM $M\in\M(d)$ from its closest element in $\M'$ in classical and quantum protocols, respectively. Intuitively, $\epsilon_C(\M')$, $\epsilon_Q(\M')$ measure the ability to distinguish between a device capable of implementing any measurement in $\M(d)$ and another one restricted to POVMs in $\M'$.

We have just defined two quantities to measure the performance of a quantum measurement device. Our next step will be to evaluate such quantifiers when the target system is a two-level quantum system.

\section{Two-level systems}
\label{two_levels}

\subsection{General considerations}
\label{consid}

As in section \ref{considerations}, suppose that our target system $S$ is a two-level system with $H_S=\Delta\proj{1}$, which we couple to a battery $B$ with hamiltonian $H_B$.

For the time being, assume that the spectrum of $H_B$ is discrete. We say that an increasing sequence of eigenvalues of $H_B$ forms a chain of length $L$ iff it can be written as $(\nu+\Delta k)_{k=0}^{L-1}$, for some $\nu\in \R$. We call a chain maximal if it is not a subset of a larger chain. It is clear that each $\mu\in \mbox{spec}(H_B)$ belongs to a unique maximal chain. Hence we have that

\begin{eqnarray}
&H_T=\sum_j\nu_j\proj{0}\otimes \proj{j,0} +\nonumber\\
&+ \sum_{k=1}^{L(j)-1} (\nu_j+k\Delta)\{\proj{0}\otimes\proj{j,k}+\proj{1}\otimes\proj{j,k-1}\}+\nonumber\\
&+(\nu_j+ L(j)\Delta)\proj{1}\otimes \proj{j,L(j)-1},
\end{eqnarray}

\noindent where each $j$ denotes a maximal chain of length $L(j)$ and $\ket{j,k}$ is the normalized eigenvector of $H_B$ with eigenvalue $\nu_j+k\Delta$.

The eigenspaces of $H_T$ are thus

\begin{eqnarray}
&\proj{0}\otimes \proj{j,0}, \nonumber\\
&\{\proj{0}\otimes\proj{j,k}+\proj{1}\otimes\proj{j,k-1}\}_{k=1}^{L(j)-1},\nonumber\\
&\proj{1}\otimes \proj{j,L(j)-1}\}_j.
\end{eqnarray}

In analogy with the previous section, call these spaces $\H^{j,k}$, with $\H^{j,0}=\mbox{span}(\ket{0}\ket{j,0})$ and $\H^{j,L(j)}=\mbox{span}(\ket{1}\ket{j,L(j)-1})$, and let $\sigma_B$ be the state of the battery. Then, a physical POVM $\{M_x=\oplus_{k,j} M^{k,j}_x\}$, with 

\be
M^{j,k}_x=\sum_{a,b=0,1}(M^{j,k}_x)_{ab}\ket{a}\bra{b}\otimes\ket{j,k-a}\bra{j,k-b},
\label{physical}
\ee

\noindent induces in system $S$ an \emph{effective} POVM

\be
(\tilde{M}_x)_{ab}=\sum_{j}\sum_{k=0}^{L(j)}\tr\{\sigma_B\ket{j,k-a}\bra{j,k-b}\}(M^{j,k}_x)_{ab}.
\label{many_chains_mix}
\ee

If $H_B$ is of the form (\ref{hamil_A}), by eq. (\ref{many_chains_mix}), a measurement device possessing a battery $B'$ with energy operator $H_{B'}=\sum_{n=0}^{L-1} n\proj{n}$ (where $L=\max_j L(j)$) can simulate the above measurement via the following protocol:

\begin{enumerate}
\item
Choose $j$ randomly according to the distribution $\{p_j\}$, with $p_j\equiv \sum_{k=0}^{L(j)-1}\bra{j,k}\sigma_B\ket{j,k}$.
\item
Prepare the state $\sigma_j\propto \sum_{k,k'=0}^{L(j)-1} \bra{j,k}\sigma_B\ket{j,k'}\ket{k}\bra{k'}$ in the ancillary system $B'$.
\item
Measure the joint system $SB'$ with the physical POVM $\{M^j_x=\oplus_k M^{j,k}_x\}$ (after any arbitrary completion, note that $\sum M^j_x\leq\id$).
\end{enumerate}

\subsection{How close is our measuring device from being perfect?}
\label{perfect}
Imagine that our preparation devices are capable of setting up any battery state $\rho_B\in\B$ at the beginning of the experiment. Suppose further that the energy operator of system $B$ is of the form

\be
H_B=\sum_{j}\sum_{k=0}^{L(j)-1}(\nu_j+ k\Delta)\proj{k,j}
\label{hamil_A}.
\ee

In Appendix \ref{role_tau} we show that, given those conditions,

\be
\epsilon_C(\M_{\B})=\epsilon_Q(\M_{\B})=\frac{1}{2}\{1-\tau(\B)\},
\ee

\noindent where

\be
\tau(\B)=\max_{\sigma\in \B}\sum_j\sum_{k=0}^{L(j)-2}\left|\bra{j,k+1}\sigma\ket{j,k}\right|.
\label{def_tau}
\ee

Moreover, let $\sigma^\star\in\B$ be any state maximizing (\ref{def_tau}). Then, for any general one-qubit POVM $M=\{M_x\}_x\in \M$, $\sigma^\star$ allows to generate an effective two-level POVM $\hat{M}=\{\hat{M}_x\}$, with 

\be
(\hat{M}_x)_{ab}=(M_x)_{ab}\{[1-\tau(\B)]\delta_{ab}+\tau(\B)\}.
\ee

\noindent The resolution of (\ref{def_tau}) thus allows to define a prepare-and measure strategy which simulates the whole set $\M$ with accuracy $1-\tau(\B)$.

Correspondingly, if the ancillary system has a continuous energy spectrum, the value of $\tau(\{\sigma_B\})$ equals:

\be
\tau(\{\sigma_B\})=\int dE \bra{E}\sigma_B\ket{E+\Delta}.
\ee

\noindent In the particular case that $\sigma_B$ is a pure state with energy density $f(E)dE$, we can re-express this last equation as

\be
\tau(\{\sigma_B\})=\int dE f^{1/2}(E)f^{1/2}(E+\Delta).
\label{continuous}
\ee

\noindent For example, take $f(E)dE\approx\frac{e^{\frac{-E^2}{2\sigma}}}{\sqrt{2\pi\sigma}}dE$. Then, $\tau(\{\sigma_B\},\Delta)\approx e^{-\frac{\Delta^2}{8\sigma}}$, i.e., the measurement device state is useless to measure systems with $\Delta\gg \sqrt{\sigma}$.

\subsection{Finite spectrum}
\label{finite_spec}

Picture an experimental scenario where we only have control over the first $d$ energy levels of our battery system. That is, we can only prepare battery states with energy density $f(E)=\sum_{k=0}^{d-1}p_k\delta(E-E_k)$, for some energy values $\{E_k\}$. Equivalently, suppose that the spectrum of $H_B$ is finite, i.e., that any measurement of the energy of the battery can only produce a finite set of outcomes. Denote by $d$ the cardinality of $\mbox{spec}(H_B)$. What is the set of measurements ${\cal M}(\C^d,2)$ which one can perform over system $S$ with this model, and how does it differ from the set of all possible POVMs ${\cal M}(2)$? Most importantly, how fast does ${\cal M}(\C^d,2)$ tend to ${\cal M}(2)$ in the limit $d\to \infty$?

As we saw in section \ref{consid}, the existence of more than one chain in the spectrum of the battery does not provide any advantage. In the following we will hence assume that $H_B=\sum_{k=0}^{d-1} k\Delta\proj{k}$.

Our aim is therefore to study the set of effective POVMs implementable via measurement apparatuses equipped with a battery of energy distribution $\{p_k\geq 0:\sum_{k=0}^{d-1}p_k=1\}$. By Appendix \ref{algorithms}, we can take the battery states to be pure, i.e., $\ket{\psi}=\sum_{k=0}^{d-1}c_k\ket{k}$. Then, according to the last section, in order to quantify the difference between $\M(\C^d)$ and $\M$, all we have to do is compute

\be
\tau(\C^d)=\max\left\{\sum_{k=0}^{d-2}|c_kc_{k+1}|:\sum_{k=0}^{d-1}|c_{k}|^2=1\right\}.
\label{tau_CD}
\ee

In Appendix \ref{eigen_finite} we show that the solution of (\ref{tau_CD}) is $\cos\left(\frac{\pi}{d+1}\right)$. It follows that

\be
\epsilon_C(\M(\C^d))=\epsilon_Q(\M(\C^d))=\frac{1}{2}\left\{1-\cos\left(\frac{\pi}{d+1}\right) \right\}.
\ee

Curiously enough, the states $\ket{\psi^\star_d}$ maximizing eq. (\ref{def_tau}) happen to have a non-trivial energy distribution, given by 

\be
p(k)=\frac{2}{d+1}\sin^2\left(\frac{(k+1)\pi}{d+1}\right), k=0,...,d-1.
\label{distrib_star}
\ee

\subsection{Finite energy}

Consider now a scenario where in principle we can prepare any initial battery state, but we do not wish to spend too much energy in the process. Note that the energy of states $\ket{\psi^\star_d}$ in the previous section is equal to $(d-1)\Delta/2$, i.e., it grows linearly with the dimension $d$. This makes one wonder whether such energetic states are actually necessary in order to attain a good approximation to $\M(2)$. Or, in other words, how well can we approximate $\M(2)$ when our battery is infinite dimensional, but its energy is bounded?

Call $\M(\bar{E},2)$ the set of two-level POVMs attainable via devices with a battery of energy smaller or equal than $\bar{E}>0$ for $H_B=\sum_{k=0}k\Delta\proj{k}$. In analogy with the previous section, define $\epsilon_C^\star(\bar{E})$, $\epsilon_Q^\star(\bar{E})$ as $\epsilon_C(\{\rho:\tr(\rho H_B)\leq \bar{E}\})$, $\epsilon_Q(\{\rho:\tr(\rho H_B)\leq \bar{E}\})$, respectively. From the previous section and the observation that $\bra{\psi_d^\star}H_B\ket{\psi_d^\star}=(d-1)/2$, it follows that $\M(\bar{E},2)_{\bar{E}\to \infty}\to\M(2)$. Moreover, 

\be
\epsilon_C^\star(\bar{E})=\epsilon_Q^\star(\bar{E})\leq \frac{1}{2}\left\{1-\cos\left(\frac{\pi}{\left\lfloor \frac{2\bar{E}}{\Delta}\right\rfloor +2}\right)\right\},
\ee

\noindent that is, for large $E$, $\epsilon_Q^\star(\bar{E})\lesssim\frac{0.6168\Delta^2}{\bar{E}^2}$. We wonder whether this last bound is tight, or close to tight. Could it be that there exists an energy threshold $E_0>0$ such that $\M(\bar{E},2)=\M(2)$ for $\bar{E}>E_0$?

Again, by eq. (\ref{def_tau}), the answer to these questions depends on how well we can approximate $1$ by $\sum_{k=0}^\infty |c_{k}c_{k+1}|$, under the constraints $\sum_{k}|c_k|^2=1$, $\sum_{k}|c_k|^2k\Delta\leq \bar{E}$.

What we find in Appendix \ref{fixed_e} is that $\tau(\{\sigma:\tr(\sigma H_B)\leq \bar{E}\})=\varphi(\bar{E}/\Delta)$, with

\be
\varphi(z)=\min_{\lambda}\frac{z+\{\mu:j_{\mu-1,1}=2\lambda\}}{2\lambda},
\label{phi}
\ee

\noindent where $j_{n,1}$ denotes the first zero of the Bessel function of the first kind $J_n$. The numerical minimization of this function is not problematic, since all its extreme points are global minima, see Appendix \ref{fixed_e}. For $z\gg 1$, $\varphi(z)$ behaves as 

\be
\varphi(z)\approx 1-\frac{0.9468}{z^2}.
\label{asym_phi}
\ee

\noindent Consequently,

\be
\epsilon_Q^\star(\bar{E})=\epsilon_C^\star(\bar{E})=\frac{1}{2}\{1-\varphi(\bar{E}/\Delta)\}\approx \frac{0.4734\Delta^2}{\bar{E}^2}, \mbox{ for } \bar{E}\gg \Delta.
\ee

The states $\ket{\Psi^\star_{\bar{E}}}=\sum_{k=0}^\infty c_k\ket{k}$ with average energy $\bar{E}$ maximizing eq. (\ref{def_tau}) will be denoted \emph{power states}. Their coefficients $\{c_k\}$ are given by the recurrence formula:

\be
c_{k+1}=\frac{k+\{\mu:j_{\mu-1,1}=2\lambda^\star\}}{\lambda^\star}c_k-c_{k-1},
\ee

\noindent where $\lambda^\star\in \R^+$ denotes the minimizer of (\ref{phi}). 

It is instructive to compare how useful power states are for high precision measurements as opposed to the usual coherent states, of the form $\ket{\alpha}=e^{-\frac{|\alpha|^2}{2}}\sum_{k=0}^\infty\frac{\alpha^k}{\sqrt{k!}}\ket{k}$, with $\bar{E}=|\alpha|^2\Delta$. For coherent states,

\be
\tau(\{\proj{\alpha}\})=\sum_{k=0}^\infty\frac{|\alpha|^{2k+1}}{\sqrt{k!(k+1)!}}\approx 1-\frac{1}{8|\alpha|^2}=1-\frac{\Delta}{8\bar{E}},
\ee

\noindent where the approximation holds for $|\alpha|\gg 1$ \cite{carruthers}. Now, we can quantify measurement precision by $\log_{10}(1-\tau)$, i.e., by the number of significant decimal places up to which we can approximate $\M(2)$. According to this definition, the last equation means that, for $\bar{E}/\Delta\gg 1$, power states \emph{double} the precision we may reach with coherent states of the same energy. It is hence an interesting question whether current technology allows producing power states in the lab.

\section{Characterization of the sets of accessible measurements}
\label{charact}

The results of the last section show that $\M(\C^d,2),\M(\bar{E},2)\not=\M(2)$, and determine an operational distance between these sets. They leave open, though, the problem of characterizing which two-level POVMs $M\in \M(2)$ can be realized with a measurement device with bounded energy spectrum, or average energy. More generally, given a POVM $M\in M(d)$, we would like to decide if such a POVM can be implemented in a $d$-dimensional target system $S$ with non-trivial energy operator $H_S$ with the aid of measurement devices with energy operator $H_B$.

The dual of this problem is also of interest for the quantum information theory community: given an arbitrary vector of operators $V=(V_0,V_1,...,V_n)$, maximize $\sum_{x=0}^n\tr(M_xV_x)$, for $M\in\M(\C^d)$. This class of problems includes the \emph{energy constrained state discrimination problem}: given a device which randomly prepares a multi-level quantum state $\{\rho_i\}_{i=0}^n$ according to some known probability distribution $\{p_i\}_{i=0}^n$, maximize the probability of correctly guessing which state $\rho_i$ was produced by measuring it with a device possessing a $d$-level battery. 

In Appendix \ref{algorithms} we show how to formulate the characterization of $\M(\C^d,2)$ and linear optimizations over $\M(\C^d,2)$ as semidefinite programs \cite{sdp} involving $O(d)$ $2\times 2$ complex matrices. Due to this small scaling, using standard convex optimization packages like sedumi \cite{sedumi}, we found that a normal desktop can carry optimizations over $\M(\C^d)$ for $d>4000$. We used such programs to prove that the state $\sigma^\star$ with energy distribution (\ref{distrib_star}) is not a universal resource state, in the sense that there exist measurements $M\in \M(\C^d,2)$ which are not achievable with such an ancillary state, thus recovering the conclusions of \cite{mehdi}. As shown in Appendix \ref{algorithms}, the algorithm can be easily adapted to characterize the set $\M(\C^d,d')$ for arbitrary (given) hamiltonians $H_B\in B(\C^d), H_S\in B(\C^{d'})$. The corresponding semidefinite program involves $O(dd')$ $d'\times d'$ complex matrices, and is therefore computationally efficient in both the ancilla and the target dimensions.

The next problem is how to characterize $\M(\bar{E},2)$. Since a full characterization of $\M(\bar{E},2)$ would involve optimizations over infinite dimensional energy distributions $\{p_k:k\geq 0\}$, it is unlikely that we can reduce it to a semidefinite program. The approach followed in Appendix \ref{algorithms} is to define two sequences of inner ($\{M_d^I(\bar{E},2)\}$) and outer ($\{M^O_d(\bar{E},2)\}$) approximations of $\M(\bar{E},2)$, i.e., $\M_d^I(\bar{E},2)\subset \M(\bar{E},2)\subset \M^O_d(\bar{E},2)\subset \M(2)$, with $\lim_{d\to\infty}\M^I_d(\bar{E},2)=\lim_{d\to\infty}\M^O_d(\bar{E},2)=\M(\bar{E},2)$. Each of these approximations can be computed via a semidefinite program involving $O(d)$ $2\times 2$ positive semidefinite matrices. Moreover, the speed of convergence of the scheme is bounded by $O(\bar{E}/d\Delta)$ (see Appendix \ref{algorithms}). Maximizing a linear expression $f(M)$ over $M\in\M(\bar{E},2)$ can thus be accomplished by optimizing over the sets $\M^O_d(\bar{E},2)$ and $\M^I_d(\bar{E},2)$. The first optimization will return an upper bound $\hat{f}_d^O$ on the maximal value $\hat{f}=\max_{M\in\M(\bar{E},2)}f(M)$, while the second optimization will return a feasible POVM $M\in \M(\bar{E},2)$ attaining a sub-optimal value $\hat{f}^I_d$. The difference $\hat{f}_d^O-\hat{f}^I_d$ will quantify the error of the approximation. This algorithm can be adapted as well to describe effective measurements in multi-level quantum systems with devices of energy bounded by $\bar{E}$ if the spectrum of $H_B$ is either finite dimensional or admits a sufficiently simple description.

\section{Why can we make quantum measurements at all?}
\label{paradox}

As we pointed out in section \ref{considerations}, in order to perform non-trivial quantum measurements over system $S$, the hamiltonian of our device's battery $H_B$ must be \emph{exactly} resonant with the hamiltonian of the system under observation $H_S=\Delta \proj{1}$. Such a state of affairs seems very unphysical, since it would allow us to distinguish a hamiltonian $\Delta \proj{1}$ from, say, $H_B=(\Delta+\epsilon) \proj{1}$ for arbitrarily small $\epsilon\not=0$. How can we solve this paradox?

One possibility is that the energy difference is accounted for by extra degrees of freedom with continuous energy operator. Suppose that we are actually trying to perform a non-trivial measurement over a two-level system $S$ with $H_S=\Delta \proj{1}$ by coupling it to a two-level battery with energy operator $H_B=(\Delta+\epsilon)\proj{1}$, with $\Delta\gg\epsilon>0$. If $\ket{\psi_B}=c_0\ket{0}+c_1\ket{1}$ and $\epsilon=0$, we would expect to find a $\tau(\psi_B)=|c_0c_1|$. For $\epsilon>0$, on the contrary, the theory predicts $\tau(\psi_B)=0$. Imagine, though, that there is an extra degree of freedom in the lab, i.e., a third quantum system $C$ with wavefunction $\ket{\psi_{C}}=\int_{0}^\infty \sigma^{1/2}(E)dE\ket{E}$, where $\sigma(E)dE$ denotes the energy density of system $C$, and $\{\ket{E}dE\}$ is an (undercomplete) orthonormal basis of energy states. This continuous degree of freedom could correspond, for example, to the kinetic energy of the clock mediating the measurement interaction. As shown in section \ref{perfect}, the value of $\tau$ for a pure ancillary state $\ket{\phi_B}$ with energy density $f(E)dE$ used to control a two-level system with energy operator $H_S$ is given by eq. (\ref{continuous}).

The energy density $f(E)dE$ of the system $BC$ in the previous case can be seen equal to

\be
f(E)=\{\sigma(E)|c_0|^2+\sigma(E-\Delta-\epsilon)|c_1|^2\}dE,
\ee

\noindent and, consequently, the potential for measuring system $S$ is determined by the quantity

\begin{eqnarray}
\tau(\ket{\psi_B}\otimes\ket{\psi_{C}})=&&\int dE (\sigma(E)|c_0|^2+\sigma(E-\Delta-\epsilon)|c_1|^2)^{1/2}\cdot\nonumber\\
&&\cdot(\sigma(E+\Delta)|c_0|^2+\sigma(E-\epsilon)|c_1|^2)^{1/2}dE.
\label{crossing}
\end{eqnarray}

In the particular case in which $\sigma(E)dE$ is very concentrated around the energy values $[E_0-\sigma,E_0+\sigma]$, with $\epsilon\ll\sigma\ll \Delta$, the above expression is very close to $|c_0c_1|$, as expected. Note, though, that due to the localization condition, $\tau(\{\psi_{C}\})\approx 0$, i.e., the auxiliary system $C$ cannot do much by itself.

As an example, imagine that $\ket{\psi_B},\rho_S$ correspond to the quantum description of the inner degrees of freedom of two systems, say, molecules, which are able to move in space. Suppose further that such systems are, at the beginning of the experiment, in a very-well localized state in momentum space (perhaps because they have been sent to collide and thus interact). Call $\ket{\psi}_{C}$, $\ket{\psi}_{D}$ the corresponding wavefunctions, and assume, for simplicity that they are identical and equal to

\be
\ket{\phi}\propto \int d^3\vec{p}e^{-\frac{p^2}{4\sigma}}\ket{\vec{p}}.
\ee

\noindent Under the assumption that the masses of each system are both equal to $m$, we have that

\be
\ket{\psi_{C}}\otimes \ket{\psi_{D}}\propto \int d\vec{P}_T d\vec{p} e^{-\frac{\vec{P}_T^2}{8\sigma}}e^{-\frac{\vec{p}^2}{2\sigma}}\ket{\vec{P}_T}\ket{\vec{p}},
\ee

\noindent where $\vec{p}_C=\frac{1}{2}\vec{P}_T+\vec{p}$, $\vec{p}_D=\frac{1}{2}\vec{P}_T-\vec{p}$. If we further assume conservation of total linear momentum, we can imagine a prior measurement of $\vec{P}_T$ followed by an interaction between the systems $A,S$ and the remaining of $CD$. Now, for whatever outcome of $\vec{P}_T$, the energy density of the remaining of system $CD$ (we only count the kinetic terms) is given by the distribution

\be
\sigma(E)dE= \left(\frac{m}{\sigma}\right)^{3/2}\frac{2}{\sqrt{\pi}}e^{-\frac{mE}{\sigma}}E^{1/2}dE.
\ee

\noindent The application of formula (\ref{crossing}) gives us

\begin{eqnarray}
\tau(ABCD)&&> e^{-\frac{m\epsilon}{2\sigma}}\int_{\epsilon} dE \left(\frac{m}{\sigma}\right)^{3/2}\frac{2}{\sqrt{\pi}} |c_0||c_1|e^{-\frac{mE}{\sigma}}E^{1/4}(E-\epsilon)^{1/4}\nonumber\\
&&> |c_0||c_1|e^{-\frac{3m\epsilon}{2\sigma}}.
\end{eqnarray}

\noindent The first inequality is close to being saturated for $\frac{\sigma}{m}\ll \Delta$, the second one is just an approximation. In either case, one can see that, as $\epsilon\to 0$, we recover the exact result in a continuous way.

\begin{appendix}

\section{General remarks about energy conservation}
\label{general}

\subsection{State evolution under energy conservation}

Let $S$, be a quantum system, with associated Hilbert space $\H_S$. Call $H\in B(\H)$ the energy operator of system $S$, and let $\Omega:B(\H_S)\to B(\H_S)$ be a trace-preserving quantum channel. The following statements are equivalent:

\begin{enumerate}

\item
$\Omega(\rho)$ has the same energy distribution as $\rho$ for all states $\rho\in B(\H_S)$.
\item
$\Omega(\rho)=\sum_{i} A_i\rho A_i^\dagger$, with 

\be
[A_i,H]=[A_i^\dagger,H]=0,\sum_{i}A^\dagger_iA_i=\id_S.
\label{evol}
\ee

\item
There exists a quantum system $A$, with associated Hilbert space $\H_A$, and a unitary operator $U\in B(\H_S\otimes \H_A)$ with $[U,H\otimes \id_A]=0$ such that $\Omega(\rho)=\tr_A\{U(\rho\otimes \proj{0}_A)U^\dagger\}$.
\end{enumerate}

\begin{obs}
Note that, for unitary operators $U\in B(\H_S\otimes \H_A)$, the condition 

\be
\tr\{U\rho_{SA} U^\dagger (H\otimes \id_A)\}=\tr\{\rho_{SA} (H\otimes \id_A)\}
\ee
\noindent for all joint states $\rho_{SA}$ is equivalent to $[U,H\otimes \id_A]=0$.
\end{obs}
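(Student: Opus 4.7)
My plan is to prove the two directions of the equivalence separately, since both are short and rely only on the cyclicity of the trace and a standard fact about operators fully determined by their expectation values.

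For the easy direction $(\Leftarrow)$, assume $[U,H\otimes \id_A]=0$. Then $U^\dagger (H\otimes \id_A) U = (H\otimes \id_A) U^\dagger U = H\otimes \id_A$, and by cyclicity of the trace
\be
\tr\{U\rho_{SA}U^\dagger (H\otimes \id_A)\} = \tr\{\rho_{SA}\, U^\dagger (H\otimes \id_A) U\} = \tr\{\rho_{SA}(H\otimes \id_A)\}
\ee
for every $\rho_{SA}$.

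For the nontrivial direction $(\Rightarrow)$, I would again use cyclicity to rewrite the hypothesis as
\be
\tr\{\rho_{SA}\,[U^\dagger (H\otimes \id_A)U - (H\otimes \id_A)]\}=0 \quad \forall\, \rho_{SA}\in B(\H_S\otimes\H_A).
\ee
Now I would invoke the standard fact that a bounded Hermitian operator $X$ with $\tr(\rho X)=0$ for all density matrices $\rho$ must vanish: taking $\rho=\proj{\psi}$ gives $\bra{\psi}X\ket{\psi}=0$ for all $\ket{\psi}$, and by polarization this forces $X=0$. Note that $X\equiv U^\dagger (H\otimes \id_A)U - (H\otimes \id_A)$ is manifestly Hermitian (both terms are), so the argument applies and yields $U^\dagger (H\otimes \id_A) U=H\otimes \id_A$. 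Multiplying on the left by $U$ gives $[U,H\otimes \id_A]=0$.

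The only mildly technical point, and the step I would dwell on, is the passage from ``zero expectation value on all states'' to ``the operator itself is zero.'' In finite dimension this is immediate from the polarization identity, while in infinite dimension one should remark that the assumption is quantified over all states, in particular pure states, so the same polarization argument applies on a dense domain and extends by continuity when $H$ is bounded. For unbounded $H$, one restricts to $\rho_{SA}$ supported on the domain of $H\otimes \id_A$, which is enough since these states are already dense. Apart from this caveat, the proof is essentially a one-line application of trace cyclicity.
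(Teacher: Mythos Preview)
Your proof is correct. The paper states this observation as a brief remark without providing its own proof, so there is nothing to compare against; the argument you give---cyclicity of the trace together with the fact that a Hermitian operator with vanishing expectation on every state must be zero---is exactly the standard justification one would expect here.
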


\begin{proof}
($1\Rightarrow 2$) Let $H=\sum_n E_n P_n$ be the spectral decomposition of $H$, where $\{P_n\}_n$ denote orthogonal projectors. Let us assume that (1) holds, and, consequently, $\tr\{\rho P_n\}=\tr\{\Omega(\rho) P_n\}$, for all states $\rho$. Since $\Omega$ is a quantum channel, there exist Kraus operators $\{A_i\}_i$ such that $\Omega(\rho)=\sum_i A_i\rho A_i^\dagger$, with $\sum_i A^\dagger_i A_i=\id_S$. Condition (1) thus implies that 

\be
\sum_i A_i^\dagger P_n A_i=P_n,
\ee

\noindent for all $n$. For $n\not=m$, it follows that $\sum_i\tr(P_m A_i^\dagger P_n A_i)=0$, and so 

\be
\tr(P_m A_i^\dagger P_nP_n A_i P_m)=\tr(P_m A_i^\dagger P_n A_i)=0
\ee

\noindent for all $i,m\not=n$. This last condition implies that $P_nA_iP_m=0$ for all $n\not=m$. In other words, $A_i=\left(\sum_{n}P_n\right)A_i\left(\sum_{m}P_m\right)=\sum_n P_nA_iP_n$, and thus $[A_i,H]=[A^\dagger_i,H]=0$ for all $i$.

($2\Rightarrow 1$) If $[A_i,H]=[A_i^\dagger,H]=0$, then $A_i=\oplus_n A^{(n)}_i$, with $A^{(n)}_i=P_nA_iP_n$. Call $\H^{(n)}_S$ the support of $P_n$, and consider the operator $U=\oplus_n U^{(n)}$, with $U^{(n)}$ being unitary operators defined in the subspace $\H^{(n)}_S\otimes\H_A$ as

\be
U^{(n)}\ket{\psi_n}_S\ket{0}_A=\sum_i A_i^{(n)}\ket{\psi_n}_S\ket{i}_A,
\label{build_uni}
\ee

\noindent for all $\ket{\psi_n}\in \H^{(n)}$. Obviously, $[U,H\otimes \id_A]=0$. Moreover, it can be verified that $\sum_{i}A_i\rho_SA_i^\dagger=\tr_A\{U(\rho_S\otimes \proj{0}_A)U^\dagger\}$.

($3\Rightarrow 1$) Since $U$ commutes with $H\otimes \id_A$, $\tr\{U\rho_{SA}U^\dagger (P_n\otimes \id_A)\}=\tr\{\rho_{SA}(P_n\otimes \id_A)\}$, for any initial state $\rho_{SA}\in B(\H_S\otimes \H_S)$. In the particular case $\rho_{SA}=\rho_S\otimes \proj{0}_A$ the previous relation implies that $\Omega(\rho_S)=\tr_A\{U(\rho_{S}\otimes\proj{0}_A)\}$ has the same energy distribution as $\rho_S$.

\end{proof}

\subsection{Measurements under energy conservation}

As before, consider two quantum systems: system $S$, our target (with energy operator $H\in B(\H_S)$) and system $P$, the pointer (with trivial energy operator). We will model the measurement process of the state $\rho_S$ in system $S$ via coupling it with system $P$, initially in state $\proj{0}$, through an energy-conserving quantum channel $\Omega_{SP}$, followed by a von Neumann measurement of the pointer. Let $\{A_i\}_i$ be the Kraus operators of $\Omega$, with $[A_i,H]=[A_i^\dagger,H]=0$. After a fine-grained von Neumann measurement $\{\proj{k}\}_{k=0}^{n-1}$ of system $P$ with outcome $x$, the (unnormalized) state of system $S$ will be 

\be
\sum_i \bra{x}A_i\ket{0}\rho_S\bra{0}A^\dagger_i\ket{x}.
\ee

\noindent The probability of obtaining outcome $x$ will thus be given by $p(x)=\tr(\rho_S M_x)$, with $M_x\equiv \sum_i \bra{0}A^\dagger_i\ket{x}\bra{x}A_i\ket{0}$ (note that $[M_x,H]=0$). The statistics of any measurement under energy conservation are thus given by $p(x)=\tr(\rho_S M_x)$, with

\be
M_x\geq 0, \sum_x M_x=\id_S, [M_x,H]=0.
\ee

Conversely, any measurement associated to a set of operators $\{M_x\}_{x=0}^n$ satisfying the above conditions can be realized by making system $S$ interact with a pointer $P$ via an energy-conserving quantum channel, followed by a von Neumann measurement of the pointer. To see why, define $A_x\equiv\sqrt{M_x}\otimes V^x$, where $V$ is the displacement unitary operator, i.e., $V\ket{k}=\ket{k+1\mbox{ (mod n)}}$. Clearly, $[A_x,H\otimes \id_P]=0$ and $\sum_x A^\dagger_xA_x=\id_{SP}$, so $\Omega(\rho_{SA})=\sum_{x=0}^{n-1}A_x\rho_{SA}A_x^\dagger$ is an energy-conserving quantum channel. Finally, note that $\tr\{\Omega(\rho_S\otimes \proj{0}_P) (\id_S\otimes\proj{x})\}=\tr\{\rho_S M_x\}$.

\section{Algorithms}
\label{algorithms}

In this Appendix we will devise a collection of algorithms to characterize the effective POVMs generated by measurement devices with different constraints in their batteries's energy distribution.

Note that, for a fixed battery state $\sigma_B$ (not just pure) eq. (\ref{many_chains_mix}) can be used straightforwardly to formulate a semidefinite program \cite{sdp} to characterize the set of effective POVMs generatable via $\sigma_B$. However, if we allow $\sigma_B$ to vary over a continuum of possible states $\B\subset B(\H_B)$, eq. (\ref{many_chains_mix}) is no longer an SDP. If we want to optimize over non-trivial sets of battery states, it is thus necessary to find a more appropriate reformulation of the problem.

Given an energy distribution $\{p_{j,k}\geq 0:\sum_{j}\sum_{k=0}^{L(j)-1}p_{j,k}=1\}$, consider first the problem of characterizing all one-qubit POVM elements which can generated by any possible battery with such an energy distribution. 

First of all, we can assume the battery state to be pure. Indeed, suppose that $\sigma_B$ is mixed, and consider a purification $\ket{\psi}_{BB'}$ in an extended space $\H_B\otimes \H_{B'}$. Then, we can write $\ket{\psi}_B$ as

\be
\ket{\psi}_B=\sum_{j}\sum_{k=0}^{L(j)-1}\sqrt{p_{j,k}}\ket{\varphi_{j,k}},
\ee 

\noindent with $\ket{\varphi_{j,k}}$ being eigenvectors of the energy operator $H_B\otimes\id_{B'}$. By redefining the physical POVMs as $M^{j,k}_x\to (\proj{\varphi_{j,k}}+\proj{\varphi_{j,k-1}})\M^{j,k}_x\otimes \id_{B'} (\proj{\varphi_{j,k}}+\proj{\varphi_{j,k-1}})$, we end up with a pure state $\ket{\psi}_B$, with the same energy distribution as $\sigma_B$, defined over a battery system with a non-degenerate hamiltonian $H'_B$ with the same spectrum as $H_B$, and which allows to reproduce the same effective POVM $\{\tilde{M}_x\}_x$.

Now, the effective measurements generated by apparatuses with battery state $\ket{\psi_B}=\sum_{j,k}c_{j,k}\ket{j,k}$ are given by

\be
\tilde{M}_x=\sum_{j}\sum_{k=0}^{L(j)-1}c^*_{j,k-a}c_{j,k-b}(M^{j,k}_x)_{ab},
\label{many_chains}
\ee

\noindent where $c_{j,-1}=c_{j,L(j)}=0$.

Note that we can rewrite this equation as

\be
\tilde{M}_x=\sum_j\sum_{k}\tilde{M}^{j,k}_x,
\ee

\noindent with

\be
\tilde{M}^{j,k}_x\equiv\left(\begin{array}{cc}c^*_{j,k}&0\\0&c^*_{j,k-1}\end{array}\right)M^{j,k}_x\left(\begin{array}{cc}c_{j,k}&0\\0&c_{j,k-1}\end{array}\right).
\label{equi}
\ee

What conditions must $\{\tilde{M}_x^{k,j}\}$ satisfy? By definition, they are positive semidefinite. Also, by the completeness relation, they must satisfy

\be
\sum_{x}\tilde{M}_x^{j,k}=\left(\begin{array}{cc}p_{j,k}&0\\0&p_{j,k-1}\end{array}\right),
\label{completeness2}
\ee 

\noindent where $\{p_{j,k}\equiv |c_{j,k}|^2\}$ denotes the energy distribution of the battery. 

Conversely, for any set of non-negative operators $\{\tilde{M}^{j,k}_x\}_x$ satisfying (\ref{completeness2}), one can find a physical POVM $\{M_x^{j,k}\}_x$ such that $\tilde{M}^{j,k}_x$ satisfies eq. (\ref{equi}). A characterization of the set of POVMs attainable with states of energy distribution $\{p_{j,k}\}$ (remember that $p_{j,-1}=p_{j,L_(j)}=0$) is thus given by

\begin{eqnarray}
\tilde{M}_x=&&\sum_j\sum_{k=0}^{L(j)}\tilde{M}_x^{j,k},\nonumber\\
\mbox{s.t. } &&\tilde{M}^{j,k}_x\geq 0,\sum_{x}\tilde{M}^{j,k}_x=\left(\begin{array}{cc}p_{j,k}&0\\0&p_{j,k-1}\end{array}\right),
\label{sdp_gen}
\end{eqnarray}

\noindent which, in finite dimensions, constitutes a semidefinite program (SDP) \cite{sdp}.

In finite dimensions, it is thus immediate to design an algorithm to characterize all effective POVMs generatable through ancillary states with energy distribution defined via the shadow of a spectrahedron, i.e., all collections of numbers $\{p_{j,k}\}$ such that $\sum_{j,k}A_{j,k}p_{j,k}+\sum_{l}B_ly_l\geq 0$ for some matrices $A_{j,k},B_l$ and some extra (free and fixed) variables $\vec{y}$. Simply turn $\{p_{j,k}\}$ into free variables in program eq. (\ref{sdp_gen}) and add the extra constraint $\sum_{j,k}A_{j,k}p_{j,k}+\sum_{l}B_ly_l\geq 0$.

\subsection{Characterizarion of $\M(\C^d,2)$}

As remarked in section \ref{finite_spec}, we just have to consider batteries with energy operator $H_B=\sum_{k=0}^{d-1}k\proj{k}$. The energy distribution of the states of the battery $\{p_k\}_{k=0}^{d-1}$ is just limited by the constraint $\sum_{k=0}^{d-1} p_k=1$. A program to characterize $\M(\C^d,2)$ is thus

\begin{eqnarray}
\tilde{M}_x=&&\sum_{k=0}^d\tilde{M}_x^k,\nonumber\\
\mbox{s.t. } &&\tilde{M}^k_x\geq 0,\sum_{x}\tilde{M}^k_x=\left(\begin{array}{cc}p_k&0\\0&p_{k-1}\end{array}\right),\nonumber\\
&&p_{-1}=p_{d}=0,\sum_{k=0}^{d-1}p_k=1.
\label{charac_MCD}
\end{eqnarray}

\subsection{Characterization of $\M(\bar{E},2)$}

Our aim is to solve the feasibility problem

\begin{eqnarray}
\tilde{M}_x=&&\sum_{k=0}^\infty\tilde{M}_x^k,\nonumber\\
\mbox{s.t. } &&\tilde{M}^k_x\geq 0,\sum_{x}\tilde{M}^k_x=\left(\begin{array}{cc}p_k&0\\0&p_{k-1}\end{array}\right),\nonumber\\
&&p_{-1}=0,\sum_{k=0}^{\infty}p_k=1, \sum_{k=0}^\infty p_k k\Delta\leq \bar{E}.
\label{ideal}
\end{eqnarray}

\noindent Unfortunately, the above problem is not an SDP, since it involves an infinite number of free variables. 

The way we will solve this problem will involve relaxing or restricting the above program via truncations of order $d$ in order to get inner and outer approximations to the set $\M(\bar{E},2)$. We will then show that such approximations converge to $\M(\bar{E},2)$ and derive some bounds on the speed of convergence.

First, consider the obvious restriction $\M^I_d(\bar{E},2)=\M(\bar{E},2)\cap\M(\C^d,2)$. This is equivalent to program (\ref{charac_MCD}) with the additional restriction $\sum_{k=0}^{d-1}p_k k\Delta\leq \bar{E}$. This program thus constitutes an inner approximation of $\M(\bar{E},2)$.

As for a relaxation of (\ref{ideal}), fix a number $d\in \N$ greater than 1, and note that, for any energy distribution $\{p_k\}$ we have that

\be
\sum_{k\geq d}\sum_x M_x^k=\sum_{k\geq d} \left(\begin{array}{cc}p_k&0\\0&p_{k-1}\end{array}\right)= \left(\begin{array}{cc}P_d&0\\0&P_d+p_{d-1}\end{array}\right),
\ee

\noindent with $P_d=1-\sum_{k=0}^{d-1}p_k$. Also, notice that $\sum_{k=0}^{d-1}kp_k+dP_d\leq \sum_{k=0}^{\infty}kp_k\leq \frac{\bar{E}}{\Delta}$. A suitable relaxation of (\ref{ideal}) is thus the program

\begin{eqnarray}
\tilde{M}_x=&&\sum_{k=0}^{d-1}\tilde{M}_x^k+\tilde{M}_x^d,\nonumber\\
\mbox{s.t. } &&\tilde{M}^k_x\geq 0,k=0,...,d\nonumber\\
&&\sum_{x}\tilde{M}^k_x=\left(\begin{array}{cc}p_k&0\\0&p_{k-1}\end{array}\right),\forall k\leq d-1,\nonumber\\
&&\sum_{x}\tilde{M}^d_x=\left(\begin{array}{cc}P_d&0\\0&P_d+p_{d-1}\end{array}\right),\nonumber\\
&&p_{-1}=0, \sum_{k=0}^{d-1}p_k+P_d=1,\sum_{k=0}^{d-1} kp_k+dP_d\leq \frac{\bar{E}}{\Delta}.
\label{outer}
\end{eqnarray}

Let us call $\M^O_d(\bar{E},2)\supset \M(\bar{E},2)$ the outer approximation of $\M(\bar{E},2)$ so generated. An immediate question is how close $\M^O_d(\bar{E},2)$ and $\M^I_d(\bar{E},2)$ actually are. It turns out that, for any POVM $M\in \M^O_d(\bar{E},2)$ it is fairly easy to construct an approximate POVM $N\in \M^I_{d+1}(\bar{E},2)$. Indeed, define the matrices:

\begin{eqnarray}
&\hat{M}_0^{d}=\left(\begin{array}{cc}P_d&0\\0&p_{d-1}\end{array}\right), \tilde{M}^d_x=0\mbox{ for }x\not=0,\nonumber\\
&\hat{M}_0^{d+1}=\left(\begin{array}{cc}0&0\\0&P_d\end{array}\right), \tilde{M}^{d+1}_x=0\mbox{ for }x\not=0.
\end{eqnarray}

Then the POVM given by 

\be
N_x=\sum_{k=0}^{d-1}\tilde{M}^k_x+\hat{M}^{d}_x+\hat{M}^{d+1}_x
\ee

\noindent is trivially in $\M(\bar{E},2)\cap \M(\C^{d+1},2)$.

Given an arbitrary POVM $M\in\M^O_d(\bar{E},2)$, its distance with respect to $\M^{I}_{d+1}(\bar{E},2)$ is bounded by

\begin{eqnarray}
\dist_Q(M,\M^I_{d+1}(\bar{E}))&&=\frac{1}{2}\max_{\rho_{DQ},N\in \M^I_{d+1}(\bar{E})}\sum_x\|\tr_D\{\rho_{DQ}(M_x-N_x)\otimes \id_Q\}\|_1\nonumber\\
&&\leq \frac{1}{2}\max_{\rho_{BQ}}\sum_x\|\tr_D\{\rho_{DQ}(\tilde{M}_x^{d+1}-\hat{M}^{d}_x-\hat{M}^{d+1}_x)\otimes \id_Q\}\|_1\nonumber\\
&&\leq \frac{1}{2}\max_{\rho_{BQ}}\sum_x\tr\{\rho_{DQ}(\tilde{M}_x^{d+1}+\hat{M}^{d}_x+\hat{M}^{d+1}_x)\otimes \id_Q\}\nonumber\\
&&\leq \frac{1}{2}\left\|\left(\begin{array}{cc}2P_d&0\\0&2(P_d+p_{d-1})\end{array}\right)\right\|_\infty\leq \frac{E}{\Delta(d-1)},
\end{eqnarray}

\noindent where the last inequality comes from the fact that $(p_{d-1}+P_d)(d-1)\leq \bar{E}/\Delta$.

Being this a rough estimation of the distance between $\M^O_d(\bar{E},2)$ and $\M^I_{d+1}(\bar{E},2)$, we recommend the reader to run both programs in order to perform linear optimizations over $\M(\bar{E},2)$.

\subsection{Extension to multi-level systems}

The above results can be extended to characterize the set of effective POVMs implementable in a target system of dimension $d'$ and (non-degenerate) energy operator $H_S$ by a measurement device with (non-degenerate) hamiltonian $H_B$. Indeed, given $H_S=\sum_m E_m\proj{n}$ and $H_B=\sum_c\mu_n\proj{n}$, the total energy operator is described by eq. (\ref{total_energy}). Re-expressed in terms of energy subspaces, the operator reads:

\be
H_T=\sum_s \tilde{E}_s P_s,
\ee

\noindent where 

\be
P_s=\sum_{i=0}^{d(s)-1}\proj{m(s,i)}\otimes\proj{n(s,i)}.
\ee

\noindent Here, $d(s)\leq d'$ denotes the rank of the energy subspace $P_s$, and $\{m(s,i)\}_{i=0}^{d(s)-1}$, $\{n(s,i)\}_{i=0}^{d(s)-1}$ are used to denote the eigenvectors $\ket{m(i,s)}\otimes\ket{n(i,s)}$ of $H_T$ with eigenvalue $\tilde{E}_s$. Note that $\braket{m(s,i)}{m(s,j)}=\braket{n(s,i)}{n(s,j)}=0$, for $i\not=j$. Now, let $\sigma_B=\proj{\Psi_B}$, with $\ket{\Psi_B}=\sum_{s}\sum_{i=0}^{d(s)-1}c_{n(s,i)}\ket{m(i,s)}\otimes\ket{n(i,s)}$. Any physical POVM implemented over the joint system $SB$ must be of the form $\{M_x=\oplus_s M_x^s\}_x$, with $\sum_x M^s_x=P_s$. Then one can check that the effective POVM performed on system $S$ is given by

\be
\tilde{M}_s=\sum_s\tilde{M}_x^s,
\ee

\noindent with 

\be
(\tilde{M}_x^s)_{m(s,i),m(s,j)}=c^*_{s,i}c_{n(s,j)}(M^s_x)_{m(s,i),n(s,j)}.
\ee

As before, it follows that $\{\tilde{M}_x^s\}_x$ are simply limited by the positive semidefiniteness condition and the completeness relation

\be
\sum_x\tilde{M}_x^s=\mbox{diag}(p_{n(s,i)}),
\ee

\noindent where $p_{n(s,i)}=|c_{n(s,i)}|^2$ denote the energy occupation numbers of $\rho_B$.

Characterizing the set of all measurements effected on $S$ by measurement devices with energy occupation numbers describable as the shadow of a spectrahedron can then be trivially formulated as a semidefinite program. That includes the case where such numbers are just constrained by summing up to one, i.e., we can easily characterize the set $\M(\C^d,d')$. The case where the spectrum of the battery system has infinite cardinality can be also attacked by hierarchies of SDPs as in the previous section.

\section{Interpretation of $\dist_C$, $\dist_Q$}
\label{distances}

Imagine a quantum device capable of performing a certain demolition measurement $M^{a}\in\M$, thus producing an output $x$. We are further promised that the device is actually measuring either $M^0\in M$ or $M^1\in\M$ with probability $1/2$. Suppose, then, that we input a given quantum state $\rho$ and obtain an outcome $x$. Let $p_\rho^a(x)=\tr\{\rho M^a_x\}$. Clearly, the strategy which maximizes the probability of guessing $a$ is to choose $a=\arg\max\{p_\rho^a(x):a=0,1\}$. The maximum probability of guessing $a$ by classical means is thus

\begin{eqnarray}
P_C &&=\max_{\rho}\frac{1}{2}\sum_x\max\{p_\rho^0(x),p_\rho^1(x)\}\nonumber\\
&&=\frac{1}{2}+\max_{\rho}\frac{1}{4}\sum_x|p_\rho^0(x)-p_\rho^1(x)|=\nonumber\\
&&=\frac{1}{2}\{1+\dist_C(M^0,M^1)\},
\end{eqnarray}

\noindent with $\dist_C$ defined as in eq. (\ref{dist_C}).

Suppose now that, rather than analyzing the inputs of the device, we let it measure subsystem $D$ of an entangled state $\rho_{DQ}\in B(\H_D\otimes \H_Q)$, where $\H_D$ ($\H_Q$) denotes the Hilbert space corresponding to the device (the auxiliary system). After obtaining output $x$ from the machine, we perform a POVM $N^x=\{N^x_{a}:a=0,1\}$ over system $Q$, whose outcome will be our guess on the value of $a$. The maximum probability of success of this scheme is

\begin{eqnarray}
P_Q&&=\frac{1}{2}\max_{\rho,N^x}\sum_x\tr\{\rho (M^0_x\otimes N^x_0+M^1_x\otimes N^x_1)\}=\nonumber\\
&&=\frac{1}{2}+\frac{1}{2}\max_{\rho,N_x}\sum_x\tr\{\rho (M^0_x-M^1_x)\otimes N^x_0\}=\nonumber\\
&&=\frac{1}{2}+\frac{1}{2}\max_{\rho}\sum_x\tr_+\{\rho^0_x-\rho^1_x\},
\label{intermed}
\end{eqnarray}

\noindent with $\rho^a_x=\tr_D(\rho M^a_x\otimes \id_Q)$ and $\tr_+(T)$ denoting the sum of the positive eigenvalues of operator $T$. Now, it is easy to see that, for any self-adjoint operator $T$, $\|T\|_1=2\tr_+(T)-\tr(T)$. Substituting in (\ref{intermed}), we have that

\begin{eqnarray}
P_Q&&=\frac{1}{2}+\frac{1}{4}\sum_x\|\rho^0_x-\rho^1_x\|_1+\frac{1}{4}\sum_x\tr_Q(\rho^0_x-\rho^1_x)=\nonumber\\
&&=\frac{1}{2}\{1+\dist_Q(M^0,M^1)\},
\end{eqnarray}

\noindent where $\dist_Q$ is defined as in (\ref{dist_Q}) and in the last line we made use of the fact that $\sum_x\tr_Q(\rho^a_x)=\tr(\rho \sum_xM^a_x\otimes\id_Q)=1$, for $a=0,1$.

A natural question is whether $\dist_C$ and $\dist_Q$ are actually different. They can be proven equal if $M^0$, $M^1$ happen to have only two outcomes. Indeed, suppose that such is the case. Then we have that

\be
\dist_Q(M^0,M^1)=\max_{\rho_{DQ}}\|\tr_D\{\rho_{DQ}(M^0_0-M^1_0)\}\|_1.
\ee

\noindent Using standard identities of the trace norm \cite{nielsen_chuang}, we have that

\begin{eqnarray}
\|\tr_D\{\rho_{DQ}(M^0_0-M^1_0)\}\|_1&&=\max_{-\id\leq A\leq \id}\tr\{\rho_{DQ}(M^0_0-M^1_0)\otimes A\}\nonumber\\
&&\leq \|M^0_0-M^1_0\|_{\infty}.
\end{eqnarray}

On the other hand, we have that

\be
\dist_C(M^0,M^1)=\max_{\rho}|\tr\{\rho (M^0_0-M^1_0)\}|=\|M^0_0-M^1_0\|_\infty.
\ee

\noindent Comparison of these two relations yields $\dist_Q(M^0,M^1)=\dist_C(M^0,M^1)$.

However, even in dimension 2 we can find pairs of POVMs $M^0,M^1$ with $\dist_Q(M^0,M^1)>\dist_C(M^0,M^1)$. Take, for instance the continuous POVMs defined by 

\be
M^0_\psi=2 \proj{\psi}d\psi, M^1_\psi=\id_2d\psi,
\ee

\noindent where $d\psi$ denotes the invariant measure on the pure states of $\C^2$. Since both POVMs are invariant under rotations, we can choose $\rho=\proj{0}$ in eq. (\ref{dist_C}). Then we have

\be
\dist_C(M^0,M^1)=\int d\psi \left||\braket{0}{\psi}|^2-\frac{1}{2}\right|=\frac{1}{4}.
\ee

On the other hand, let $\rho=\proj{\phi}$ in eq. (\ref{dist_Q}), where $\ket{\phi}=\frac{1}{\sqrt{2}}(\ket{01}-\ket{10})$. Then, we have that

\be
\dist_Q(M^0,M^1)\geq \int d\psi \left\|\frac{\proj{\psi^\perp}}{2}-\frac{\id_2}{4}\right\|_1=\frac{1}{2}.
\ee

\section{The role of $\tau(\B)$}
\label{role_tau}

The goal of this Appendix is to prove the following result.

\begin{theo}
Let $\B$ be a set of states in $B(\H_B)$, and let $H_B\in L(\H_B)$ be a non-degenerate energy operator admitting a decomposition in terms of maximal chains $j$ as

\be
H_B=\sum_{j}\sum_{k=0}^{L(j)-1}(\nu_j+k\Delta)\proj{j,k}.
\ee

\noindent Then,

\be
\epsilon_C(\M_{\B})=\epsilon_Q(\M_{\B})=\frac{1}{2}\{1-\tau(\B)\},
\ee

\noindent where

\be
\tau(\B)=\max_{\rho\in \B}\sum_j\sum_{k=0}^{L(j)-2}|\bra{j,k+1}\rho\ket{j,k}|.
\label{def_tau_theo}
\ee

Moreover, for any general one-qubit POVM $M=\{M_x\}_x$, the state $\sigma^\star\in\B$ maximizing (\ref{def_tau_theo}) allows to generate an effective two-level POVM $\hat{M}=\{\hat{M}_x\}$, with

\be
(\hat{M}_x)_{ab}\{1+\delta_{ab}(1-\tau(\B)\}.
\ee

\end{theo}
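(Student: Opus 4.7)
The theorem asserts two equalities: $\epsilon_C(\M_\B)=\epsilon_Q(\M_\B)$, and the common value equals $\frac{1}{2}[1-\tau(\B)]$. Since $\dist_Q\geq\dist_C$ gives $\epsilon_Q\geq\epsilon_C$ automatically, it is enough to establish a two-sided bound: an achievability $\epsilon_Q(\M_\B)\leq\frac{1}{2}[1-\tau(\B)]$ by constructing, for each $M\in\M(2)$, an explicit close-by $\hat M\in\M_\B$; and a converse $\epsilon_C(\M_\B)\geq\frac{1}{2}[1-\tau(\B)]$ by pinpointing a single hard POVM. The workhorse is the effective-POVM formula (\ref{many_chains_mix}).

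\textbf{Achievability and the explicit form of $\hat M$.} Let $\sigma^\star\in\B$ maximise (\ref{def_tau_theo}) and set $\theta_{j,k}\equiv\arg\bra{j,k-1}\sigma^\star\ket{j,k}$. Given a target qubit POVM $\{M_x\}$, I would build a physical POVM block-by-block by copying $M_x$ up to a phase: $(M^{j,k}_x)_{aa}=(M_x)_{aa}$ and $(M^{j,k}_x)_{01}=e^{-i\theta_{j,k}}(M_x)_{01}$ (with the obvious truncation at the chain boundaries $k=0,L(j)$). Positivity on each $\H^{j,k}$ follows from $|(M_x)_{01}|^2\leq(M_x)_{00}(M_x)_{11}$, which is untouched by the phase rotation, and completeness is inherited from $\sum_xM_x=\id$. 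When this is fed into (\ref{many_chains_mix}), the diagonal weights $\bra{j,k}\sigma^\star\ket{j,k}$ sum to $\tr\sigma^\star=1$ and therefore preserve the diagonal of $M_x$; meanwhile the chosen phases align every off-diagonal contribution, making the coefficient of $(M_x)_{01}$ equal to $\sum_{j,k}|\bra{j,k-1}\sigma^\star\ket{j,k}|=\tau(\B)$. This produces exactly the $\hat M$ in the statement.

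\textbf{Distance bound.} The difference $A_x\equiv M_x-\hat M_x$ is purely off-diagonal on the target qubit, with entries $(1-\tau(\B))(M_x)_{01}$. For any bipartite state $\rho_{DQ}$, the operator $\tr_D[\rho_{DQ}(A_x\otimes\id_Q)]$ is a Hermitian $Q$-operator linear in the off-diagonal block $\rho^{(01)}\equiv(\bra{0}\otimes\id)\rho_{DQ}(\ket{1}\otimes\id)$. Two standard estimates then close the argument: $\|\rho^{(01)}\|_1\leq\sqrt{\tr\rho^{(00)}\tr\rho^{(11)}}\leq\frac{1}{2}$, from positivity of $\rho_{DQ}$ plus AM--GM on the diagonal traces; and $\sum_x|(M_x)_{01}|\leq 1$, from the block inequality $2|(M_x)_{01}|\leq(M_x)_{00}+(M_x)_{11}$ combined with $\sum_xM_x=\id$. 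They combine into $\dist_Q(M,\hat M)\leq\frac{1}{2}[1-\tau(\B)]$, and hence the same bound for $\epsilon_Q(\M_\B)$ and \emph{a fortiori} for $\epsilon_C(\M_\B)$. The subtle point is that the paper's two-outcome identity $\dist_Q=\dist_C$ does \emph{not} directly apply when $M$ has many outcomes; what rescues us is the purely off-diagonal structure of $A_x$, which isolates a single block of $\rho_{DQ}$ whose trace norm is bounded by $\frac{1}{2}$.

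\textbf{Converse via the $\sigma_x$ measurement.} Take $M_\pm=\frac{1}{2}(\id\pm\sigma_x)$. Being two-outcome, this POVM enjoys the clean identity $\dist_C(M,\tilde M)=\|M_+-\tilde M_+\|_\infty$. For any $\tilde M\in\M_\B$ realised by some $\sigma_B\in\B$ and physical POVMs $\{M^{j,k}_\pm\}$, completeness forces $(\tilde M_+)_{01}+(\tilde M_-)_{01}=0$, so the triangle inequality applied to (\ref{many_chains_mix}) gives $2|(\tilde M_+)_{01}|\leq\sum_{j,k}|\bra{j,k-1}\sigma_B\ket{j,k}|\sum_\pm|(M^{j,k}_\pm)_{01}|\leq\tau(\B)$, where the inner sum is bounded by $1$ on each two-dimensional block by the same $2|X_{01}|\leq X_{00}+X_{11}$ inequality. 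Consequently $\|M_+-\tilde M_+\|_\infty\geq|(M_+-\tilde M_+)_{01}|=|\frac{1}{2}-(\tilde M_+)_{01}|\geq\frac{1-\tau(\B)}{2}$, giving $\dist_C(M,\tilde M)\geq\frac{1}{2}[1-\tau(\B)]$ uniformly over $\tilde M\in\M_\B$, and therefore $\epsilon_C(\M_\B)\geq\frac{1}{2}[1-\tau(\B)]$. This completes the chain $\epsilon_C\geq\frac{1}{2}[1-\tau(\B)]\geq\epsilon_Q\geq\epsilon_C$.
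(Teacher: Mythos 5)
Your proof is correct and follows essentially the same route as the paper's (Appendix D): the same phase-aligned block construction of $\hat M$ from $\sigma^\star$, the same purely off-diagonal trace-norm estimate for the achievability bound on $\dist_Q$, and the same $\sigma_x$-measurement as the hard instance for the converse. The only step you elide is the reduction from many-outcome to two-outcome competitors $\tilde M\in\M_{\B}$ in the converse (needed before invoking $\dist_C(M,\tilde M)=\|M_+-\tilde M_+\|_\infty$), which the paper supplies with a one-line triangle inequality by merging all outcomes $x\geq 1$ into a single effect.
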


\begin{proof}
Let $\sigma\in \B$ and consider the problem of maximizing $2|(M_0)_{10}|$ over all $M\in\M(\{\sigma\})$. From eq. (\ref{many_chains_mix}), the result is

\be
2\max_{M^{j,k}}\left|\sum_{j}\sum_{k=0}^{L(j)}\bra{j,k+1}\sigma\ket{j,k}(M^{j,k}_0)_{10}\right|.
\ee

\noindent It is clear that the physical POVMs $\{M^{jk}\}$ maximizing the last expression must be chosen such that

\be
M^{j,k}_0=\left(\begin{array}{cc}\frac{1}{2}&\frac{e^{i\theta_{jk}}}{2}\\\frac{e^{-i\theta_{jk}}}{2}&\frac{1}{2}\end{array}\right),
\ee

\noindent with $e^{-i\theta_{jk}}\bra{j,k+1}\sigma\ket{j,k}\in \R^+\cup\{0\}$.

\noindent The maximum value of $2|(M_0)_{10}|$ attainable with $\B$ as a resource thus corresponds to $\tau(\B)$, as defined in eq. (\ref{def_tau}).

Now, let $\sigma^\star\in \B$ be the minimizer of eq. (\ref{def_tau}) [or a very good approximation, since the maximum may not be achievable], and  suppose that we wish to approximate the POVM $M\in\M$. Take the physical POVMs to be 

\be
M^{j,k}_x=\left(\begin{array}{cc}1&0\\0&e^{-i\theta_{jk}}\end{array}\right)M_x\left(\begin{array}{cc}1&0\\0&e^{i\theta_{jk}}\end{array}\right), 
\ee

\noindent for all $k=0,...,d$. From eq. (\ref{many_chains_mix}) we have that the final POVM $\hat{M}$ is then given by

\begin{eqnarray}
(\hat{M}_x)_{00}&&=(M_x)_{00}\sum_{j}\sum_{k=0}^{L(j)}\tr(\sigma\proj{j,k})=(M_x)_{00},\nonumber\\
(\hat{M}_x)_{11}&&=(M_x)_{11}\sum_{j}\sum_{k=0}^{L(j)}\tr(\sigma\proj{j,k-1})=(M_x)_{11},\nonumber\\
(\hat{M}_x)_{10}&&=(M_x)_{10}\sum_{j}\sum_{k=0}^{L(j)}e^{i\theta_{jk}}\bra{j,k}\sigma\ket{j,k-1}\nonumber\\
&&=(M_x)_{10}\tau(\B),
\end{eqnarray}

\noindent and, analogously, $(\hat{M}_x)_{01}=(M_x)_{01}\tau(\B)$. The second part of the theorem is thus proven.

Let us now demonstrate the first part. Consider $\bar{M}=\{(\id+(-1)^a\sigma_x)/2\}\in \M$. Then, by the triangle inequality, for any $N\in\M(\B,2)$, we have that

\begin{eqnarray}
\dist_C(\bar{M},N)=&&\frac{1}{2}\max_{\rho}\sum_{x=0,1} |\tr\{\rho(\bar{M}_x-N_x)\}|+\sum_{x\geq 2}|\tr\{\rho N_x\}|\geq\nonumber\\
&&\geq \frac{1}{2}\max_{\rho}\sum_{x=0,1} |\tr\{\rho(\bar{M}_x-N'_x)\}|=\dist_C(\bar{M},N'),
\end{eqnarray}

\noindent with $N'\in \M(\B)$ defined as $N'_0\equiv N_0$, $N'_1\equiv \sum_{x>1}N_x$. 

It follows that, in computing $\dist_C(\bar{M},\M(\B,2))$ we can restrict to two-outcome POVMs $N\in\M(\B,2)$. For any such POVM we have that

\begin{eqnarray}
\dist_C(\bar{M},\M(\B))&&\geq\dist_C(\bar{M},N)=\frac{1}{2}\max_{\rho}\sum_{x=0,1} |\tr\{\rho(\bar{M}_x-N_x)\}|=\nonumber\\
&&=\|\bar{M}_0-N_0\|_\infty\geq \left|\frac{1}{2}-\sup\{(N_0)_{01}:N\in\M(\B,2)\}\right|=\nonumber\\
&&=\frac{1}{2}\{1-\tau(\B)\}.
\label{upper}
\end{eqnarray}

Finally, call $f:\M\to \M$ the map which transforms each POVM $M$ into $\hat{M}$, as defined in the theorem, and note that $\dist_Q(M,N)$ can be written as

\be
\dist_Q(M,N)=\frac{1}{2}\max_{\rho,S^x}\sum_x\tr\{\rho(M_x-N_x)\otimes S^x\},
\ee

\noindent with $-\id\leq S^x\le \id$. Then, for any POVM $M\in\M$, we have that

\begin{eqnarray}
\dist_Q(M,M(\B,2))&&\leq \frac{1}{2}\max_{\rho,S^x}\sum_x \tr\{\rho(M_x-f(M_x))\otimes S^x\} \nonumber\\
&&=\frac{1-\tau(\B)}{2}\max_{\rho,S^x}\sum_x |(M_x)_{01}|\tr(\rho R^x\otimes S^x)\nonumber\\
&&\leq\frac{1-\tau(\B)}{2}\sum_x\sqrt{(M_x)_{00}}\sqrt{(M_x)_{11}}\nonumber\\
&&\leq \frac{1}{2}[1-\tau(\B)],
\label{lower}
\end{eqnarray}

\noindent where $R_x=e^{i\theta_x}\ket{0}\bra{1}+h.c.$, for some $\theta_x\in [0,2\pi)$ and the last inequality follows from the fact that the vectors $v_x=\sqrt{(M_x)_{00}}$, $w_x=\sqrt{(M_x)_{11}}$ are unitary, since $\sum_x (M_x)_{00}=\sum_x (M_x)_{11}=1$.

From eqs. (\ref{upper}) and (\ref{lower}), we therefore have that 

\be
\frac{1}{2}\{1-\tau(\B)\}\leq\epsilon_C(\B)\leq \epsilon_Q(\B)\leq \frac{1}{2}\{1-\tau(\B)\}.
\ee

\noindent The theorem is proven.

\end{proof}

\section{Computation of $\tau(\C^d)$}
\label{eigen_finite}

It is easy to see that the solution of eq. (\ref{tau_CD}) corresponds to the maximum eigenvalue of a $d\times d$ matrix of the form

\be
A_d=\left(\begin{array}{cccccc}0&1/2& & & & \\1/2&0&1/2& & & \\ &1/2&0&1/2& & \\ & &1/2&0&1/2& \\ & & & & \ddots\end{array}\right).
\label{A_d}
\ee

Let $\ket{\psi}=\sum_{k=0}^{d-1}c_k\ket{k}$ be an eigenvector of $A_d$ with eigenvalue $\mu$. Then, the coefficients $\{c_n\}$ must satisfy:

\be
\frac{1}{2}(c_{k+1}+c_{k-1})=\mu c_k,\mbox{ for } k=0,...,d-1,
\label{recurr}
\ee

\noindent with $c_{-1}=c_{d}=0$. Let us try the ansatz $c_k= \sin[(k+1)\alpha]$. Then, condition (\ref{recurr}) translates as

\be
\frac{1}{2}\{\sin[(k+2)\alpha]+\sin[k\alpha)]\}=\cos(\alpha)\sin[(k+1)\alpha]=\mu\sin[(k+1)\alpha],
\label{ansatz}
\ee

\noindent with $c_{d}=\sin[(d+1)\alpha]=0$. This last condition implies that $\alpha=m\pi/(d+1)$, with $m\in \Z$. From eq. (\ref{ansatz}) we thus have that $A_d$ has $d$ eigenvalues given by

\be
\cos\left(\frac{m\pi}{d+1}\right), m=1,...,d.
\label{eigen_f}
\ee

Since $A_d$ cannot have more than $d$ eigenvalues, it follows that its whole spectrum is contained in (\ref{eigen_f}). The maximum eigenvalue is obtained by taking $m=1$, in which case the corresponding (normalized) eigenvector is:

\be
\ket{\psi}^\star_d=\sqrt{\frac{2}{d+1}}\sum_{k}\sin\left(\frac{(k+1)\pi}{d+1}\right)\ket{k}.
\ee

\section{Computation of $\tau(\bar{E})$}
\label{fixed_e}

The aim of this section is to solve the following problem:

\begin{eqnarray}
\max &&\frac{1}{2}\sum_{k=0}^\infty c^*_kc_{k+1}+c^*_{k+1}c_k\nonumber\\
\mbox{s. t. } &&\sum_{k=0}^\infty|c_k|^2=1,\sum_{k=0}^{\infty}k|c_k|^2\leq E,
\label{primal}
\end{eqnarray}

\noindent where, for simplicity, we have defined $E\equiv\bar{E}/\Delta$.

The dual of this problem is:

\begin{eqnarray}
\mu(\lambda)\equiv -\inf &&\bra{\psi}H_\lambda\ket{\psi}\nonumber\\
\mbox{s. t. } &&\braket{\psi}{\psi}=1,
\label{dual}
\end{eqnarray}

\noindent with 

\be
H_\lambda=\overbrace{\sum_{k}k\proj{k}}^{\hat{H}_B}-\lambda\overbrace{\sum_{k}(\ket{k}\bra{k+1}+\ket{k+1}\bra{k})}^{\Lambda}.
\ee

For finite values of $\lambda$, the infimum in eq. (\ref{dual}) is actually a minimum, i.e., problem (\ref{dual}) can be reformulated as an eigenvalue problem (namely, computing the minimum eigenvalue of $H_\lambda$). To see this, note that we can approximate the above minimization over infinite-dimensional vectors $\ket{\psi}=\sum_{k=0}^\infty c_k\ket{k}$ by finite dimensional optimizations over vectors of the form $\ket{\psi'}\in \C^d$. Indeed, let $\ket{\psi}=\sum_{n=0}^\infty c_k\ket{k}$ be a normalized vector with $\bra{\psi}H_\lambda\ket{\psi}<0$ (it is easy to see that $\mu(\lambda)>0$ for all $\lambda>0$). Since $\|\Lambda\|_\infty=2$, we have that

\be
0<-\bra{\psi}H_\lambda\ket{\psi}\leq -\bra{\psi}\hat{H}_B\ket{\psi}+2\lambda.
\ee

\noindent It follows that $\sum_{n}|c_{n}|^2n< 2\lambda$, and so, for any $D\in \N$, 

\be
\sum_{n>D}|c_n|^2< \sum_{n>D}|c_n|^2n< \frac{2\lambda}{D+1}. 
\ee

Under the assumption that $c_n\geq 0$ for all $n$, it is easy to see that

\begin{eqnarray}
&\bra{\psi}\hat{H}_B\ket{\psi}-\frac{2\lambda}{D+1} < \bra{\psi_D}\hat{H}_B\ket{\psi_D}< \frac{\bra{\psi}\hat{H}_B\ket{\psi}}{1-\frac{2\lambda}{D+1}},\nonumber\\
&\bra{\psi}\Lambda\ket{\psi}-\frac{4\lambda}{D} < \bra{\psi_D}\Lambda\ket{\psi_D}< \frac{\bra{\psi}\Lambda\ket{\psi}}{1-\frac{2\lambda}{D+1}},
\end{eqnarray}

\noindent where $\ket{\psi_D}\propto\sum_{n\leq D}c_n\ket{n}\in \C^{D+1}$ is a (normalized) finite dimensional approximation of $\ket{\psi}$. Since we can make $D$ arbitrarily large, it is obvious that problem (\ref{dual}) can be approximated arbitrarily well by its finite-dimensional analog

\begin{eqnarray}
\min &&\bra{\psi}H^d_\lambda\ket{\psi}\nonumber\\
\mbox{s. t. } &&\ket{\psi}\in \C^d,\braket{\psi}{\psi}=1,
\label{dual_fin}
\end{eqnarray}

\noindent with $H^d_\lambda=\sum_{k=0}^{d-1}k\proj{k}-\lambda\sum_{k=0}^{d-2}(\ket{k}\bra{k+1}+\ket{k}\bra{k+1})$.

Call $\ket{\psi^\star_d}$ the minimizer of this latter problem, and consider the sequence of minimizers $(\ket{\psi_d^\star})_d$. Then, there exists a subsequence $(d_1,d_2,...)$ such that $\lim_{d_i\to\infty}\braket{0}{\psi_{d_i}^\star}=c^\star_0$ converges. Likewise we can find a subsequence $(d'_i )_i$ of $(d'_i )_i$ such that $\lim_{d'_i\to\infty}\braket{1}{\psi_{d'_i}^\star}=\braket{1}{\psi^\star}=c^\star_1$ also converges. Iterating this process, we obtain a sequence of values $(c_0^\star,c_1^\star,...)$. Now, as $\braket{\psi_d}{\psi_d}=1$ and the weight of the terms $\sum_{n>D}n|\braket{n}{\psi^\star_d}|^2$ becomes negligible for large $D$ irrespective of $d$, we have that the vector $\ket{\psi^\star}=\sum_{k=0}^\infty c_k^\star\ket{k}$ satisfies $\|\ket{\psi^\star}\|^2=1$ and $\bra{\psi^\star}H_\lambda\ket{\psi^\star}=\mu(\lambda)$.

Let us thus solve (\ref{dual_fin}). Let $\ket{\psi}=\sum_{k=0}^{d-1}c_k\ket{k}$ be an eigenvector of $H^d_\lambda$ with eigenvalue $-\mu$, with $\mu>0$. Then,

\be
c_{k+1}=\frac{k+\mu}{\lambda}c_k-c_{k-1},
\label{recurrence}
\ee

\noindent with $c_{-1}=c_{d}=0$.

Now, for $\lambda>0$, it is clear that the eigenvector with minimum eigenvalue of $H^d_\lambda$ can be chosen such that $c_k\geq 0$. It is also straightforward that it cannot contain intermediate zeros in its component vector, that is, it cannot be of the form

\be
\ket{\psi}=\sum_{k=0}^Kc_k\ket{k}+\sum_{k=K+L}^{d-1}c_k\ket{k},
\ee

\noindent for any $L> 1$. Indeed, if such were the case, the vector

\be
\ket{\psi'}=\sum_{k=0}^Kc_k\ket{k}+\sum_{k=K+L}^{d-1}c_k\ket{k-L+1}
\ee

\noindent would satisfy $\bra{\psi'}H_\lambda^d\ket{\psi'}<\bra{\psi}H_\lambda^d\ket{\psi}$. Finally, by the recurrence condition (\ref{recurrence}), it can also be seen not to be of the form $(c_1,c_2,...,c_K,0,0,...,0)$ for $\lambda>0$. It follows that $c_k>0$ for $k=0,...,d-1$. For $k=0,...,d-1$ we can thus write the $c$'s as $c_k\equiv \prod_{i=0}^kx_i$, for some $x_k>0$. Substituting in eq. (\ref{recurrence}) we have that

\be
x_k=\frac{1}{\frac{k+\mu}{\lambda}-x_{k+1}},
\label{recurrence2}
\ee

\noindent for $k\geq 1$. We also derive the conditions $x_1=\frac{\mu}{\lambda}$, $x_d=0$.

Iterating (\ref{recurrence2}), we have that 

\be
\frac{\mu}{\lambda}=x_1=\frac{1}{\frac{1+\mu}{\lambda}-\frac{1}{\frac{2+\mu}{\lambda}-\frac{1}{\ddots\frac{}{\frac{d-1+\mu}{\lambda}-x_d}}}}.
\ee

\noindent Since $x_d=0$, this implies that the minimum eigenvalue $\mu$ of $H_\lambda^d$ mus satisfy the characteristic equation

\be
\frac{\mu}{\lambda}-\frac{1}{\frac{1+\mu}{\lambda}-\frac{1}{\frac{2+\mu}{\lambda}-\frac{1}{\ddots\frac{ }{\frac{d-1+\mu}{\lambda}}}}}=0.
\label{charac}
\ee

Moreover, as long as any $\mu$ satisfies eq. (\ref{charac}) with each term in the fraction being different from zero, we can construct a set of weights $\{x_k\}_{k=1}^{d-1}$ such that $(1,x_1,x_1x_2,x_1x_2x_3,...)$ is an eigenvector of $H_\lambda^d$ with eigenvalue $\mu$. Suppose now that $\mu$ fulfills eq. (\ref{charac}), but some intermediate subfraction becomes zero, i.e.,

\be
\frac{k+\mu}{\lambda}-\frac{1}{\frac{k+1+\mu}{\lambda}-\frac{1}{\ddots\frac{}{\frac{d-1+\mu}{\lambda}}}}=0,
\label{intermediate}
\ee

\noindent with $d-1\geq k\geq 1$, and such that no intermediate subfraction of (\ref{intermediate}) is zero. We could then create a vector $(1,x_1,x_1x_2,...,\prod_{i=k}^{K-1}x_i)\in \C^{d-k}$ which happens to be an eigenvector of $H_\lambda^{d-k}+k\id$ with eigenvalue $-\mu$. But the minimum eigenvalue of $H_\lambda^{d-k}+k\id$ is strictly greater than that of $H_\lambda^d$. 

\noindent From the above discussion, it follows that the minimum eigenvalue of $H^d_\lambda$ is minus the maximum $\mu$ that satisfies eq. (\ref{charac}).

What happens when we take the limit $d\to\infty$ in expression (\ref{charac})? In \cite{wall}, page 349, we find the following beautiful identity:

\be
\frac{J_{n-1}(z)}{J_n(z)}=\frac{2n}{z}-\frac{\frac{z}{2(n+1)}}{1-\frac{\frac{(z/2)^2}{(n+1)(n+2)}}{1-\frac{\frac{(z/2)^2}{(n+1)(n+2)}}{1-\ddots}}}.
\ee

Multiplying each numerator and denominator by $\frac{2(n+k)}{z}$ accordingly, the latter expression can be shown equal to

\be
\frac{2n}{z}-\frac{1}{\frac{2(n+1)}{z}-\frac{1}{\frac{2(n+2)}{z}-\frac{1}{\frac{2(n+3)}{z}-\frac{1}{\ddots}}}}.
\ee

Identifying $\lambda=z/2$ in (\ref{charac}), we have that the solution of (\ref{dual}) is the maximum $\mu>0$ such that

\be
\frac{J_{\mu-1}(2\lambda)}{J_{\mu}(2\lambda)}=0.
\ee

By the interlacing properties of the zeros of the Bessel function, this last condition is equivalent to $J_{\mu-1}(2\lambda)=0$, and so the solution of problem (\ref{dual}) is given by the solution of the transcendent equation

\be
j_{\mu-1,1}=2\lambda.
\ee

\noindent From now on, we will denote its solution $\mu(\lambda)$.

How does this relate to our initial problem (\ref{primal})? Let $\ket{\psi}$ be such that $\bra{\psi}\hat{H}_B\ket{\psi}=E$. By definition, we have that

\be
H_\lambda+\mu(\lambda)\geq 0,
\ee

\noindent for all $\lambda>0$. 

\noindent Bracketing this expression by $\ket{\psi}$, we conclude that

\be
\sum_{k}|c_kc_{k+1}|\leq \frac{E+\mu(\lambda)}{2\lambda}.
\label{bound}
\ee

\noindent Minimizing with respect to $\lambda$, we thus arrive at

\be
\sum_{k}|c_kc_{k+1}|\leq \varphi(E),
\ee

\noindent with $\varphi(E)$ defined as in (\ref{phi}). To show that this last inequality is tight, denote by $\ket{\psi_\lambda}$ the eigenvector with minimum eigenvalue of $H_\lambda$ and notice that 

\begin{eqnarray}
-\mu(\lambda+\delta\lambda)+\mu(\lambda)&&=\frac{1}{\braket{\psi_{\lambda+\delta\lambda}}{\psi_\lambda}}\left\{\bra{\psi_{\lambda+\delta\lambda}}H_{\lambda+\delta\lambda}\ket{\psi_\lambda}-\bra{\psi_{\lambda+\delta\lambda}}H_\lambda\ket{\psi_\lambda}\right\}\nonumber\\
&&=-\delta\lambda\frac{\bra{\psi_{\lambda+\delta\lambda}}\Lambda\ket{\psi_\lambda}}{\braket{\psi_{\lambda+\delta\lambda}}{\psi_\lambda}}
\end{eqnarray}

\noindent Dividing by $\delta\lambda$ and taking the limit $\delta\lambda\to 0$, we have that

\be
\bra{\psi_\lambda}\Lambda\ket{\psi_\lambda}=\frac{\partial\mu}{\partial \lambda}(\lambda).
\ee

\noindent It follows that

\be
E(\lambda)\equiv\bra{\psi_\lambda}\hat{H}_B\ket{\psi_\lambda}=\lambda\frac{\partial\mu}{\partial \lambda}-\mu(\lambda).
\label{en_ex}
\ee

Now, it is easy to see that $E(0)=0$ and that $E(\lambda)$ is continuous in $\lambda$. On the other hand, we have that, for $\mu\gg 1$,

\be
j_{\mu,1}=\mu+c\mu^{1/3}+O(\mu^{-1/3}),
\ee

\noindent with $c\approx 1.85575$ \cite{abramo}. The identity $j_{\mu-1,1}=2\lambda$ thus implies that $\mu\approx 2\lambda-2^{1/3}c\lambda^{1/3}+O(1)$. Substituting in expression (\ref{en_ex}), we have that $\bra{\psi_\lambda}\hat{H}_B\ket{\psi_\lambda}$ diverges for $\lambda\to\infty$. All this implies that, for any $E>0$, we can find $\lambda>0$ such that $\bra{\psi_\lambda}\hat{H}_B\ket{\psi_\lambda}=E$. Such a state clearly saturates inequality (\ref{bound}). The relation $\tau(\{\rho:\tr(\rho \hat{H}_B)\leq E\})=\varphi(E)$ is thus proven.

\noindent It is worth noticing that any value of $\lambda=\bar{\lambda}$ locally minimizing the right hand side of (\ref{bound}) must satisfy

\be
\frac{\partial}{\partial \bar{\lambda}}\frac{E+\mu(\bar{\lambda})}{2\bar{\lambda}}=0.
\ee

\noindent Or, equivalently, $E=\bar{\lambda}\partial\mu(\bar{\lambda})/\partial\bar{\lambda}-\mu(\bar{\lambda})$. The state $\ket{\psi_{\bar{\lambda}}}$ thus saturates inequality (\ref{bound}) for $\lambda=\bar{\lambda}$, and, consequently, for $\lambda\not=\bar{\lambda}$ the r.h.s. of (\ref{bound}) cannot become smaller. Any extreme point of the latter function is hence a global minimum, and so computing $\phi(E)$ numerically becomes an easy task.

Finally, let us speak about the asymptotic behavior of $\varphi(E)$. Replacing $\mu(\lambda)$ by $2\lambda-2^{1/3}c\lambda^{1/3}$ in the right hand side of eq. (\ref{bound}), and minimizing with respect to $\lambda$, we find that $\lambda^\star\approx \frac{27E^3}{16c^3}$. And, consequently,

\be
\varphi(E)\approx 1-\frac{4c^3}{27}\frac{1}{E^2}.
\ee

\noindent Substituting the value of $c$, we arrive at (\ref{asym_phi}).

\end{appendix}

\end{document}